\numberwithin{equation}{section}
\newtheorem{theorem}{Theorem}[section]
\newtheorem{lemma}{Lemma}[section]
\newtheorem{corollary}{Corollary}[section]
\newtheorem{remark}{Remark}[section]
\newtheorem{proposition}{Proposition}[section]
\newcommand{\F}{\mathcal{F}}
\renewcommand{\P}{\mathbb{P}}
\newcommand{\E}{\mathbb{E}}
\newcommand{\Q}{\mathbb{Q}}
\newcommand{\p}{\Phi^{+}}
\newcommand{\R}{\mathbb{R}}
\newcommand{\A}{\mathcal{A}}
\newcommand{\tv}{\tilde{v}}
\newcommand{\zM}{z_{\text{max}}}
\newcommand{\zm}{z_{\text{min}}}
\newcommand{\tP}{\widetilde{P}}
\title{\LARGE \bf
\sc Pricing Asian Options for Jump Diffusions
\thanks{This research is partially supported by NSF Research
Grant, DMS-0604491.} \thanks{Department of Mathematics, University
of Michigan, 530 Church Street, Ann Arbor, MI 48109.}}
\author{Erhan Bayraktar \thanks{ e-mail: erhan@umich.edu}
\and Hao Xing \thanks{ e-mail: haoxing@umich.edu} }
\date{}
\begin{document}
\maketitle
\begin{abstract}
We construct a sequence of functions that uniformly converge (on compact sets) to the price of Asian option, which is written on a stock whose dynamics follows a jump diffusion, exponentially fast.
We show that each of the element in this sequence is the unique classical solutions of a parabolic partial differential equation (not an integro-differential equation).
As a result we obtain a fast numerical approximation scheme whose accuracy versus speed characteristics can be controlled. We analyze the performance of our numerical algorithm on several examples. 
\\ \\  \noindent \textbf{Key Words}. Pricing Asian Options, Jump diffusions, an Iterative Numerical Scheme, Classical Solutions of Integro-differential equations.
\end{abstract}

\section{Introduction}

We develop an efficient numerical algorithm to price Asian options, which are derivatives whose pay-off depends on the average of the stock price, for jump diffusions. The jump diffusion models are heavily used in the option pricing context since they can capture the excess kurtosis of the stock price returns and along with the skew in the implied volatility surface (see \cite{cont-tankov}). Two well-known examples of these models are i) the model of \cite{merton}, in which the jump sizes are log-normally distributed, and ii) the model of \cite{kou}, in which the logarithm of jump sizes have the so called double exponential distribution. Here we consider a large class of jump diffusion models including these two.

The pricing of Asian options is complicated because it involves solving a partial differential equation (PDE) with two space dimensions, one variable accounting for the average stock price, the other for the stock price itself. However, \cite{vecer} and \cite{vecer-xu} were able to reduce the dimension of the problem by using a change of measure argument (also see Section 2.1). When the stock price is a geometric Brownian motion \cite{vecer} showed that the price of the Asian option at time $t=0$, which we will denote by $S_0 \rightarrow V(S_0)$, satisfies $V(S_0)=S_0 \cdot v(z=z^*,t=0)$ for a suitable constant $z^*$, in which the function $v$ solves a one dimensional parabolic PDE. When the stock price is a jump diffusion, then under the assumptions that $v_t$, $v_z$ and $v_{zz}$ are continuous \cite{vecer-xu} (Theorem 3.3 and Corollary 3.4) showed that the function $v$ solves an integro-partial differential equation using It\^{o}'s lemma. 
However, a priori it is not clear that these assumptions are satisfied. In this paper, we show that for the jump diffusion models these assumptions are indeed satisfied (see Theorem~\ref{theorem-2}), i.e. we directly show that $v$ is the unique classical solution of the integro-partial differential equation in \cite{vecer-xu}.
 We do this by showing that $v$ is the limit of a sequence of functions 
constructed by iterating a suitable functional operator, which we will denote by $J$, that takes functions with certain regularity properties into the unique classical solutions of parabolic differential equations and gives them more regularity and finally showing that $v$ is the fixed point of the functional operator and that it satisfies the certain regularity properties. This proof technique is similar to that of \cite{bayraktar-finite-horizon}, in which the regularity of the American put option prices are analyzed. In the current paper, some major technical difficulties arise because the pay-off functions we consider are not bounded and also because the sequence of functions constructed is not monotonous (\cite{bayraktar-finite-horizon} was able to construct a monotonous sequence because of the early exercise feature of the American options).

The iterative construction of the sequence of functions which converge to the Asian option price naturally leads to an efficient numerical method for computing the price of Asian options. We prove that the constructed sequence of functions converges to the function $v$ uniformly (on compact sets) and exponentially fast.
Therefore, after a few iterations one can obtain the function $v$ to the desired level of accuracy, i.e.
the accuracy versus speed characteristics of the numerical method we propose can be controlled. On the other hand, since each element of the approximating sequence solves a parabolic PDE (not an integro-differential equation), we can use one of the classical finite difference schemes to determine it. We propose a numerical scheme in Section~\ref{sec:num} and analyze the performance it in the same section. 

So far numerical methods for the pricing Asian options were proposed for diffusion models: \cite{vecer}, \cite{rogers}, \cite{zhang} developed PDE methods, \cite{geman-yor} developed a single Laplace inversion method, \cite{linetsky} developed a spectral expansion approach, \cite{cai-kou} developed a double Laplace inversion method. On the other hand \cite{rogers} and \cite{thompson} obtained tight bounds for the prices. We should mention that \cite{cai-kou} also considered pricing Asian option for the double exponential jump diffusion model of \cite{kou}. 

The rest of the paper is organized as follows: In Section 2.1, we summarize the findings of \cite{vecer-xu} in the context of jump-diffusion models. In Section 2.2, we present our main theoretical results: Theorem~\ref{theorem-2} and Corollary~\ref{theorem:main}.
We propose a numerical algorithm in Section 3.1 and analyze the convergence properties of our algorithm in Section 3.2 (see Propositions~\ref{prop:disc-iter-conv} and \ref{prop:disc-conv}). Then we perform numerical experiments for two particular jump diffusion models and analyze the performance of our numerical algorithm in Section 3.3. Section 4 is devoted to development of the proof of Theorem~\ref{theorem-2}. In Sections 4.1 and 4.2 we develop the properties of the functional operator  $J$ and the properties of the sequence obtained by iterating $J$, respectively. These results are used to prove Theorem~\ref{theorem-2} in Section 4.3. A brief summary of our proof technique can be found in Section 2.2. right after the definition of the operator $J$ in (\ref{eq:def-j}).

\section{A Sequential Approximation to Price of an Asian Option}
\subsection{Dimension Reduction}
Let $(\Omega, \F, \P)$ be a complete probability space hosting a Wiener process $\{B_t; t\geq 0\}$
 and a Poisson random measure $N$, whose
mean measure is $\lambda \nu(dy) dt$, independent of the Wiener
process.  Let $(\F_t)_{t\geq 0}$ denote the natural filtration of
$B$ and $N$.  In this filtered probability space, let us define a
Markov process $S=\{S_t; t\geq 0\}$ via its dynamics as
\begin{equation}\label{eq:dyn-S}
dS_t=(r-\mu)S_{t-} \, dt + \sigma S_{t-} \, dB_t + S_{t-} \int_{\R_+}
(y-1)N(dt, dy),
\end{equation}
in which $r$ is the risk free rate, $\mu\triangleq \lambda(\xi-1)$
with assumption $\xi \triangleq \int_{\R_+}y \nu(dy)< \infty$. The process $S$ is the price of a traded stock, and under the measure $\P$, the discounted stock price $\left(e^{-rt}S_t\right)_{t \geq 0}$ is a martingale. In this framework the stock price jumps from at time $t$ the stock price moves from $S_{t-}$ to $ S_{t-} Y$, in which $Y$'s distribution is given by $\nu$.  $Y$ is a positive random variable and note that when $Y <1$ then the stock price S jumps down when $Y>1$ the stock price jumps up. In Merton's jump diffusion model (see \cite{merton}) $Y=\exp(X)$ where $X$ is a Gaussian random variable. In Kou's model (\cite{kou}) $Y=\exp(X)$ in which $X$ has the double exponential distribution.

To reduce the dimension of the Asian option pricing problem, \cite{vecer-xu} introduce
 a new measure $\Q$ by
\begin{equation}
\frac{d \mathbb{Q}}{d
\mathbb{P}}\bigg|_{\mathcal{F}_t}=e^{-rt}\frac{S_t}{S_0}, \quad t
\in [0,T].
\end{equation}
Here, $T$ is the maturity of the Asian option.
\cite{vecer-xu} also introduce a numeraire process
\begin{equation}\label{eq:def-Zt}
Z^J_t \triangleq \frac{X_t}{S_t}, \quad t \in [0,T],
\end{equation}
where $X=\{X_t; t\in [0,T]
\}$ is a self-financing portfolio, which replicates the pay-off of the Asian option, whose dynamics are given by
\begin{equation}
dX_t =q_t \, dS_t + r(X_{t-} -q_t \,S_{t-})\,dt, \quad X_0=x,
\end{equation}
in which $q_t$ defined as
\begin{equation}\label{eq:def-q}
 q_t \triangleq \frac{1}{rT}(1-e^{-r(T-t)}), \quad t\in [0,T],
\end{equation}
is the number of shares invested in the stock, and
\begin{equation} \label{eq:def-x0}
x=q_0 S_0 - e^{-rT}K_2.
\end{equation}

\cite{vecer-xu} showed that the price of the continuous
averaging Asian option with floating strike $K_1$ and fixed strike
$K_2$ defined by
\begin{equation}\label{eq:def-asian-price-P}
V(S_0)\triangleq
\E^{\P}\left\{e^{-rT}\left(\zeta\cdot\left(\frac1T \int_0^T S_tdt
- K_1S_T - K_2\right)\right)^+\right\}
\end{equation}
can also be represented as
\begin{equation}\label{eq:def-asian-price-Q}
V(S_0) = S_0 \cdot
\E^{\Q}_{0,Z_0^{J}}[\left(\zeta\cdot(Z^J_T-K_1)\right)^+],
\end{equation}
in which $\zeta \in \{-1,1\}$ indicates whether the option is a
put or a call. Throughout this paper, the short hand notation
$\E^{\Q}_{t,z}$ represents the conditional expectation under $\Q$, given
the process at time $t$ is $z$. Under the measure $\Q$, the Poisson random measure will
have mean measure $\lambda\tilde{\nu}(dy)dt$ with new jump
measure $\tilde{\nu}(dy)=y\nu(dy)$.

\subsection{Main Theoretical Results}

In this section we will show that
\begin{equation}\label{eq:lookfv}
V(S_0)=S_0 \cdot v(Z_0^{J},0),
\end{equation}
for some $v$ that is the classical solution,  i.e. $v \in C^{2,1}$, of an integro-partial differential equation (see Theorem~\ref{theorem-2} and Corollary~\ref{theorem:main}). We will also show that $v$ is the limit of a sequence of functions constructed by iterating a functional operator, which is defined in (\ref{eq:def-j}). We will show that each of the functions in this sequence are classical solutions of partial differential equations (not integro-differential equations) and that they converge to $v$ locally uniformly and exponentially fast (see Theorem~\ref{theorem-2}). The analytical properties of the functional operator (listed in the lemmas of Section~\ref{sec:propJ}) used in the construction of the approximating sequence 
play important roles in proving our main mathematical result. We will summarize the role of the functional operator below after we introduce it.

Let us introduce  the following sequence of functions
\begin{equation}\label{eq:def-vn}
v_0(z,t) \triangleq \left(\zeta\cdot(z-K_1)\right)^+, \quad
v_{n+1}(z,t) \triangleq Jv_n(z,t) \quad n\geq 0, \quad \text{for all }
(z,t)\in \R\times[0,T],
\end{equation}
in which the functional operator $J$ is defined, through its action on a test
function $f: \R \times [0,T] \rightarrow \R_+$, as follows:
\begin{equation}\label{eq:def-j}
\begin{split}
Jf(z,t) &=
\E^{\Q}_{t,z}\left\{e^{-\lambda\xi(T-t)}\left(\zeta\cdot\left(Z_T -K_1\right)\right)^+
+ \int_t^T e^{-\lambda\xi(s-t)}\lambda \cdot Pf(Z_s, s) \,ds
\right\},
\end{split}
\end{equation}
in which
$Z = \{Z_t; t\geq 0\}$ has the
dynamics
\[
 dZ_t = -\mu \left(q_t - Z_t\right) dt + \sigma \left(q_t - Z_t \right)dW_t
\]
 and
\begin{equation}\label{eq:op-P}
\begin{split}
Pf(Z_t,t) =& \int_{\R_+} f\left(Z_t +
(q_t-Z_t)\frac{y-1}{y},t\right)y \nu(dy)\\
=& \int_{\R_+} f\left(\frac{Z_t}{y}+q_t \frac{y-1}{y},t\right)y
\nu(dy).
\end{split}
\end{equation}

We will show that the sequence of functions defined in (\ref{eq:def-vn}) by iterating $J$ are classical solutions of PDEs thanks to the following analytical properties of the operator $J$  (which are developed in Section~\ref{sec:propJ}) :
1) $J$ maps functions that are Lipschitz continuous with respect to the $z$-variable (uniformly in the $t$-variable) and H\"{o}lder continuous with respect to the $t$-variable into classical solutions of PDEs (see Proposition~\ref{theorem-1}), 2)  $J$ preserves Lipschitz continuity with respect to the $z$-variable (see Lemma~\ref{lemma:lip}), 3) $J$ transforms Lipschitz continuous functions, with respect to the $z$-variable, that satisfy a linear (in the $z$-variable) growth condition (uniformly in the $t$-variable) into H\"{o}lder continuous functions of the $t$-variable (see Lemma~\ref{lemma:holder}), 4) $J$ preserves the linear growth condition in the $z-$variable (see Lemma~\ref{lemma:mjf} and Remark~\ref{remark-2.2}).
The analytical properties of $J$ can be summarized as  ``J maps nice functions (set of functions with 
a few regularity properties), to nicer 
functions (set of functions that are the classical solutions of partial differential equations, and have the same regularity properties as before).

It is a priori not clear that  the sequence of functions defined in (\ref{eq:def-vn}) has a limit. Using the properties of the operator $J$ we show that this sequence is Cauchy (see Lemma~\ref{lemma:cauchy}) and therefore has a limit (in fact the sequence converges locally uniformly and exponentially fast). We show that, the limit of this sequence, which we denote by $v_{\infty}$, is a classical solution of an integro-PDE using 1)
 the fact that it is a fixed point of the operator $J$ (see Lemma~\ref{lemma:fixpt}), 2) the facts that
it is Lipschitz continuous with respect to the $z$-variable (uniformly in the $t$-variable) (see Lemma~\ref{lemma:vinf-lip}) and H\"{o}lder continuous with respect to the $t$-variable (see Corollary \ref{lemma:vinf-holder}). 

Finally, using a verification argument we will show that the limit $v_{\infty}$ is indeed the function that satisfies (\ref{eq:lookfv}) (see Corollary~\ref{theorem:main}).

The main theoretical results that are summarized above will be stated in the next theorem and its corollary. 
The proof of Theorem~\ref{theorem-2} is given in Section~\ref {sec:proof}
 which uses the results of Sections~\ref{sec:propJ} and \ref {sec:analyze-vn} as summarized above.

\begin{theorem}\label{theorem-2}
(i) The sequence of functions defined in (\ref{eq:def-vn}) has a pointwise limit. 
Let us denote this limit by $v_{\infty}(z,t)$. 

(ii)For any compact domain $\mathcal{D}\subset \R$, $v_n(z,t)$
converges uniformly to $v_{\infty}(z,t)$ for $(z,t)\in
\mathcal{D}\times[0,T]$. Moreover,
\begin{equation}\label{eq:uniform-conv}
|v_{\infty}(z,t)-v_n(z,t)|\leq
M_{\mathcal{D}}\left(1-e^{-\lambda\eta(T-t)}\right)^n,
\end{equation}
where $M_{\mathcal{D}}$ is a constant depending on $\mathcal{D}$
and $\eta=\max\{\xi,1\}$.

(iii) For $n \geq 0$,
the function $v_{n+1}$ is the unique classical solution, i.e. $v_{n+1} \in C^{2,1}$, of
\begin{eqnarray}
& &\A(t)v_{n+1}(z,t) -\lambda\xi v_{n+1}(z,t) + \lambda \cdot
(Pv_n)(z,t) +
 \frac{\partial}{\partial t}v_{n+1}(z,t) =0 \label{eqn-vn-1}\\
 & &v_{n+1}(z,T)= (\zeta\cdot(z-K_1))^+ ,\label{eqn-vn-2}
\end{eqnarray}
for $(z,t)\in \R\times [0,T]$. The operator $\A(t)$ is defined 
as
 \begin{equation}\label{eq:def-At}
 \A(t) \triangleq  -\mu(q_t-z)\frac{\partial}{\partial z} + \frac12 \sigma^2
 (q_t-z)^2\frac{\partial^2}{\partial z^2}.
 \end{equation}
(iv) The function $v_{\infty}$
is the unique classical solution, i.e. $v_{\infty} \in C^{2,1}$, of
\begin{eqnarray}
& &\A(t)v_{\infty}(z,t) -\lambda\xi v_{\infty}(z,t) + \lambda
\cdot (Pv_{\infty})(z,t) +
 \frac{\partial}{\partial t}v_{\infty}(z,t) =0 \label{eqn-vinf-1}\\
 & &v_{\infty}(z,T)= (\zeta\cdot(z-K_1))^+ .\label{eqn-vinf-2}
\end{eqnarray}
\end{theorem}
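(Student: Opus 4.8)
The plan is to prove the four parts of Theorem~\ref{theorem-2} in the order (iii) $\to$ (i) $\to$ (ii) $\to$ (iv), bootstrapping from the analytical properties of the operator $J$ catalogued in Sections~\ref{sec:propJ} and \ref{sec:analyze-vn}. First I would establish (iii) by induction on $n$. The base case needs $v_0(z,t)=(\zeta\cdot(z-K_1))^+$ to be Lipschitz in $z$ uniformly in $t$, to have linear growth in $z$, and to be H\"older in $t$ (trivially, since it is constant in $t$); then Proposition~\ref{theorem-1} says $Jv_0=v_1$ is a classical $C^{2,1}$ solution of the PDE~\eqref{eqn-vn-1}--\eqref{eqn-vn-2} with the source term $\lambda Pv_0$, while Lemmas~\ref{lemma:lip}, \ref{lemma:holder} and \ref{lemma:mjf} (with Remark~\ref{remark-2.2}) guarantee that $v_1$ again lies in the class of functions that are Lipschitz in $z$, linearly growing, and H\"older in $t$. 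The induction step is identical: assuming $v_n$ has these regularity properties, $v_{n+1}=Jv_n$ is the unique classical solution by Proposition~\ref{theorem-1} and retains the regularity properties by the three lemmas. Uniqueness within $C^{2,1}$ (with appropriate growth) for the linear parabolic problem~\eqref{eqn-vn-1}--\eqref{eqn-vn-2} is standard (Feynman--Kac / maximum principle with a growth condition), so this part is essentially bookkeeping once the operator properties are in hand.

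For parts (i) and (ii), the key input is Lemma~\ref{lemma:cauchy}, which (as advertised) shows that $(v_n)$ is Cauchy in the sup-norm on compact sets with the geometric rate $(1-e^{-\lambda\eta(T-t)})^n$. Concretely, I would write $v_{n+1}-v_n = Jv_n - Jv_{n-1}$, use the explicit form of $J$ in~\eqref{eq:def-j} and the fact that $P$ is linear to get $|Jf(z,t)-Jg(z,t)| \le \int_t^T e^{-\lambda\xi(s-t)}\lambda\, \E^{\Q}_{t,z}|Pf(Z_s,s)-Pg(Z_s,s)|\,ds$; since $Pf-Pg$ integrates $f-g$ against the probability-ish measure $y\nu(dy)$ of total mass $\xi$, a crude bound of the form $\|Jf-Jg\| \le \xi\lambda \int_t^T e^{-\lambda\xi(s-t)}\|f-g\|\,ds \le (1-e^{-\lambda\xi(T-t)})\|f-g\|$ gives a contraction factor strictly less than one on $[0,T]$; replacing $\xi$ by $\eta=\max\{\xi,1\}$ absorbs the part of the estimate coming from the range of $Z$ needed to control the compact domain $\mathcal D$. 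Telescoping $v_\infty - v_n = \sum_{k\ge n}(v_{k+1}-v_k)$ then yields~\eqref{eq:uniform-conv} with $M_{\mathcal D}$ proportional to $\|v_1-v_0\|$ on the enlarged domain; the pointwise and locally uniform existence of $v_\infty$ follows by completeness.

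For part (iv), I would first pass the regularity properties to the limit: Lemma~\ref{lemma:vinf-lip} gives that $v_\infty$ is Lipschitz in $z$ uniformly in $t$ and Corollary~\ref{lemma:vinf-holder} gives H\"older continuity in $t$, so $v_\infty$ belongs to the same ``nice'' class as each $v_n$, and in particular $Jv_\infty$ is well-defined and, by Proposition~\ref{theorem-1}, is a classical $C^{2,1}$ solution of the corresponding linear PDE. Next I would invoke Lemma~\ref{lemma:fixpt} to conclude $v_\infty = Jv_\infty$; combined with the previous sentence this makes $v_\infty$ the classical solution of~\eqref{eqn-vinf-1}--\eqref{eqn-vinf-2} (note the source term is now $\lambda Pv_\infty$ rather than $\lambda Pv_n$, so the linear PDE has become the genuine integro-PDE). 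Uniqueness is again a verification/maximum-principle argument: if $w\in C^{2,1}$ solves~\eqref{eqn-vinf-1}--\eqref{eqn-vinf-2} with the appropriate linear growth, apply It\^o's formula to $e^{-\lambda\xi(s-t)}w(Z_s,s)$ between $t$ and $T$, take $\Q_{t,z}$-expectations so the $\A(t)+\partial_t-\lambda\xi$ terms cancel against the PDE and the jump term reproduces $\lambda P w$, and deduce $w = Jw$; then $w$ and $v_\infty$ are both fixed points of the contraction $J$, hence equal.

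The main obstacle, as flagged by the authors in the introduction, is that the payoff and its iterates are unbounded (only linear growth), so every step that ``takes expectations of $f(Z_s,s)$'' or invokes existence/uniqueness for the parabolic problems must be done in a weighted space adapted to linear growth in $z$, and the contraction estimate must be shown to survive on each compact $\mathcal D$ with the constant $M_{\mathcal D}$ tracking how far $Z_s$ can wander out of $\mathcal D$ — this is precisely why the rate involves $\eta=\max\{\xi,1\}$ and not merely $\xi$. A secondary technical point is that, unlike the monotone American-option construction of \cite{bayraktar-finite-horizon}, the sequence $(v_n)$ here is not monotone, so convergence genuinely rests on the contraction/Cauchy estimate of Lemma~\ref{lemma:cauchy} rather than on monotone convergence, and care is needed to ensure the limiting regularity (Lipschitz in $z$, H\"older in $t$) is uniform in $n$ so that it transfers to $v_\infty$.
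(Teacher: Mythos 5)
Your outline reproduces the paper's proof almost step for step: regularity of each $v_n$ via Lemmas~\ref{lemma:lip}, \ref{lemma:mjf}, \ref{lemma:holder} (packaged in Lemma~\ref{lemma:vn} and Lemma~\ref{lemma:Minf}), the PDE characterization of $v_{n+1}=Jv_n$ via Proposition~\ref{theorem-1}, convergence from Lemma~\ref{lemma:cauchy}/Corollary~\ref{cor:uniform-conv}, and the fixed-point identity from Lemma~\ref{lemma:fixpt} together with Lemma~\ref{lemma:vinf-lip} and Corollary~\ref{lemma:vinf-holder}. Two points deserve sharpening. First, your heuristic for (i)--(ii) is phrased as a plain Banach-contraction bound $\|Jf-Jg\|\le(1-e^{-\lambda\xi(T-t)})\|f-g\|$, but a global sup-norm does not exist here (the iterates only have linear growth), and restricting to a compact $\mathcal D$ does not close because $J$ samples $f$ outside $\mathcal D$ through both the diffusion $Z_s^{t,z}$ and the jump relocation inside $P$. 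What Lemma~\ref{lemma:cauchy} actually proves is a two-track estimate that propagates the constant part and the $|z|$-part separately: the constant part contracts with rate $A=1-e^{-\lambda\xi(T-t)}$ (from $\int_0^{T-t}\lambda e^{-\lambda\xi s}\,ds$), while the $|z|$-part picks up an extra factor $e^{\mu s}$ from $\E^{\Q}\{|Z_s^{t,z}|\}\le |z|e^{\mu s}+C$, and — using the identity $\mu-\lambda\xi=-\lambda$ — contracts with rate $B=1-e^{-\lambda(T-t)}$; both are dominated by $1-e^{-\lambda\eta(T-t)}$ with $\eta=\max\{\xi,1\}$. So the appearance of $\eta$ is a consequence of this specific algebra, not a generic ``absorb the range of $Z$'' adjustment. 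Second, for uniqueness in (iv) you propose an It\^o/verification argument: show any $C^{2,1}$ solution $w$ with linear growth satisfies $w=Jw$, then invoke that $J$ has a unique fixed point. The paper instead derives uniqueness by citing Friedman's theorem for the \emph{linear} parabolic problem with the source $\lambda Pv_\infty$ treated as given (this is baked into Proposition~\ref{theorem-1}). Your route is arguably more complete as a proof of uniqueness for the genuine integro-PDE (where $Pw$ depends on the unknown), at the cost of having to justify the martingale property of $e^{-\lambda\xi(s-t)}w(Z^J_s,s)$ under linear growth via a localization/Gronwall step; the paper's route is more economical but only literally addresses the linear problem with fixed inhomogeneity. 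Both are acceptable, and everything else in your plan is consistent with the paper.
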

\begin{proof}
See Section~\ref {sec:proof}.
\end{proof}

The iterative procedure in (\ref{eqn-vn-1}) simply collapses to the Vecer's PDE (see \cite{vecer}) when $\lambda=0$, i.e. when the underlying asset is a geometric Brownian motion. That is, the iteration in (\ref{eqn-vn-1})  is designed for the models in which the asset price jumps.

\begin{corollary}\label{theorem:main}
Let $V(S_0)$ be as in (\ref{eq:def-asian-price-P}), i.e. $V(S_0)$
is the value of the Asian option for jump diffusion $S$ whose
dynamics is given in (\ref{eq:dyn-S}). Then we have
\begin{equation}\label{eq:rel-V-v}
V(S_0) = S_0 \cdot v_{\infty}(z,0),
\end{equation}
in which $v_{\infty}(\cdot,\cdot)$ is the unique solution of the
integro-partial differential equation (\ref{eqn-vinf-1}) with
terminal condition (\ref{eqn-vinf-2}), and
\begin{equation}\label{eq:defn-z-x0-s0}
z=\frac{X_0}{S_0}=\frac{1}{rT}\left(1-e^{-rT}\right)-e^{-rT}\frac{K_2}{S_0}.
\end{equation}
\end{corollary}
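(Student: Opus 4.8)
The plan is to deduce Corollary~\ref{theorem:main} from Theorem~\ref{theorem-2} by a verification argument that identifies the probabilistic representation (\ref{eq:def-asian-price-Q}) with $S_0 \cdot v_\infty(z,0)$. The starting point is the dimension reduction of \cite{vecer-xu}, already recalled in Section 2.1: we know $V(S_0) = S_0 \cdot \E^{\Q}_{0,Z_0^J}[(\zeta\cdot(Z_T^J - K_1))^+]$, where under $\Q$ the numeraire-rescaled process $Z^J$ has drift $-\mu(q_t - Z_t^J)\,dt$, diffusion coefficient $\sigma(q_t - Z_t^J)\,dW_t$, and jumps from $Z_{t-}^J$ to $Z_{t-}^J + (q_t - Z_{t-}^J)\frac{Y-1}{Y}$ driven by a Poisson random measure with mean measure $\lambda\tilde\nu(dy)\,dt = \lambda y\,\nu(dy)\,dt$. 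So it suffices to show that this conditional expectation equals $v_\infty(z,0)$ with $z = Z_0^J = X_0/S_0$ given by (\ref{eq:defn-z-x0-s0}).

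The key step is to apply It\^o's formula for jump diffusions to the process $s \mapsto v_\infty(Z_s^J, s)$ on $[0,T]$. Since Theorem~\ref{theorem-2}(iv) guarantees $v_\infty \in C^{2,1}$, this is legitimate: the continuous part of the dynamics contributes exactly $\A(t) v_\infty + \partial_t v_\infty$, while the jump part contributes a compensated integral whose compensator, after the change of variables $z \mapsto z/y + q_t(y-1)/y$, is precisely $\lambda\big[(Pv_\infty)(Z_{s}^J,s) - \xi\, v_\infty(Z_s^J,s)\big]$; note the $\lambda\xi$ term arises because the mean measure under $\Q$ integrates to $\lambda\xi$ rather than $\lambda$, matching the $-\lambda\xi v_\infty$ term in (\ref{eqn-vinf-1}). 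Combining, the integro-PDE (\ref{eqn-vinf-1}) says exactly that the finite-variation (drift plus compensator) part of $v_\infty(Z_s^J,s)$ vanishes, so $v_\infty(Z_s^J,s)$ is a local martingale under $\Q$. Using the linear growth of $v_\infty$ in $z$ (which holds because it is inherited through the iteration, cf.\ Lemma~\ref{lemma:mjf} and Remark~\ref{remark-2.2}, and passes to the limit) together with moment bounds on $\sup_{s\le T}|Z_s^J|$ under $\Q$, one upgrades this to a genuine martingale. Taking expectations between $0$ and $T$ and using the terminal condition (\ref{eqn-vinf-2}) yields
\[
v_\infty(z,0) = \E^{\Q}_{0,z}\big[v_\infty(Z_T^J,T)\big] = \E^{\Q}_{0,z}\big[(\zeta\cdot(Z_T^J - K_1))^+\big],
\]
which combined with (\ref{eq:def-asian-price-Q}) gives (\ref{eq:rel-V-v}).

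The main obstacle is the martingale (rather than merely local martingale) property: the payoff $(\zeta\cdot(z-K_1))^+$ and hence $v_\infty$ grow linearly in $z$, so one must control the integrability of the stochastic integrals, which in turn requires uniform-in-$t$ linear growth bounds on $v_\infty$, $v_\infty/y$-type composites appearing in $Pv_\infty$, and the derivatives entering the It\^o correction, as well as an $L^2$ (or $L^p$) bound on $\sup_{s\le T}|Z_s^J|$. These estimates are exactly the content of the lemmas cited in Section 2.2 (Lemmas~\ref{lemma:lip}, \ref{lemma:mjf}, \ref{lemma:vinf-lip}), and one assembles them via a localization sequence of stopping times $\tau_n = \inf\{s : |Z_s^J| \ge n\} \wedge T$, applies the identity on $[0,\tau_n]$, and passes to the limit $n\to\infty$ using dominated convergence justified by the growth bounds. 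A secondary point to check is that the $Z^J$ appearing in Vecer--Xu's representation is driven by the same data as the $Z$ used to define the operator $J$ in (\ref{eq:def-j})--(\ref{eq:op-P}) — indeed the continuous dynamics coincide and the action of $P$ is built precisely from the $\Q$-jump kernel $y\,\nu(dy)$ — so no further identification is needed. The formula $z = X_0/S_0$ is immediate from (\ref{eq:def-Zt}), (\ref{eq:def-x0}), and (\ref{eq:def-q}) evaluated at $t=0$.
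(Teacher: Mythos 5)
Your proof is the same verification argument the paper uses: apply It\^o's formula to $M_t = v_\infty(Z_t^J,t)$, use the integro-PDE (\ref{eqn-vinf-1}) to see the drift-plus-compensator part vanishes, conclude the martingale property, and evaluate at $t=0$ and $t=T$ using the terminal condition and (\ref{eq:def-asian-price-Q}). You go further than the paper's terse two-line argument by explicitly flagging the local-martingale-to-true-martingale step and sketching the localization and growth-bound argument needed to close it, which the paper's proof leaves implicit.
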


\begin{proof}
Let us define
\begin{equation}\label{eq:def_Mt}
M_t= v_{\infty}(Z^J_t,t), \quad t\in [0,T]
\end{equation}
where $Z^J$, defined in (\ref{eq:def-Zt}), has the initial value $Z^J_0=z$. It follows from
(\ref{eqn-vinf-1}) and the It\^{o}'s lemma that ${M_t}$ is a
$\Q$-martigale, i.e. $M_t=\E^{\Q}\{M_T|\F_t\}$. As a result
\begin{equation}\label{eq:value}
v_{\infty}(z,0) = M_0 = \E^{\Q}_{0,z}\{M_T\} =
\E^{\Q}_{0,z}\{v_{\infty}(Z^J_T, T)\} =
\E^{\Q}_{0,z}\{(\zeta\cdot(Z^J_T-K_1))^+\} = \frac{V(S_0)}{S_0}.
\end{equation}
The last identity follows from the representation
(\ref{eq:def-asian-price-Q}).
\end{proof}

\section{Computing the Prices of Asian Options Numerically}\label{sec:num}

It follows from Theorem~\ref{theorem-2} that the
the sequence of functions $(v_n)_{n\geq 0}$ converges uniformly and exponentially fast to
 $v_{\infty}$ on any compact
domain. Therefore a few iterations of (\ref{eqn-vn-1}) and (\ref{eqn-vn-2}), starting from $v_0$ will produce an accurate approximation to $v_{\infty}$. To perform the iterations we will make use of the finite difference methods for PDEs.
Since each $v_{n+1}(\cdot,\cdot)$ is the classical solution of a partial differential equation (not an integro-partial differential equation) we can use Crank-Nicolson discretization (see page 155 of \cite{wilmott}) along with the SOR algorithm (see e.g. page 150 of \cite{wilmott}) to solve the sparse system of linear equations. In this iteration we will need to compute the integral $P v_n$, and we do this by the trapezoidal rule. 

We will describe this numerical method more precisely in Section~\ref{sec:3.1} and investigate the convergence properties in Section~\ref{sec:convnum}. In Section~\ref{Sec:3-2} we determine the performance (the speed and accuracy characteristics) of our numerical method for the jump diffusion models of \cite{kou} and \cite{merton}. 
In this section we will take the Monte-Carlo simulation results as a benchmark.

\subsection{A numerical algorithm}\label{sec:3.1}
Let us discretize (\ref{eqn-vinf-1}) using the Crank-Nicolson
method. For fixed $\Delta t$, $\Delta z$, $\zM$ and $\zm$, let $M
\Delta t=T$ and $K \Delta z = \zM-\zm$. Let us denote $z_k \triangleq z_{min}+k\Delta z$ $k=0, 1, \cdots, K$. By $\tv$ we will denote the
solution of the difference equation
\begin{equation}\label{eq:tv}
\begin{split}
&(1+p_{k,m}^0)\tv(k,m)-p_{k,m}^+ \tv(k+1,m)-p_{k,m}^-
\tv(k-1,m)\\
&\hspace{2cm}= p_{k,m+1}^+ \tv(k+1,m+1) +p_{k,m+1}^- \tv(k-1,m+1)
+(1-p_{k,m+1}^0)\tv(k,m+1)\\
&\hspace{2.3cm}+\frac{1}{2}\lambda \Delta t
\left[\left(\widetilde{P} \tv\right)(k,m+1)+\left(\widetilde{P}
\tv\right)(k,m)\right],
\end{split}
\end{equation}
for $m=M-1, M-2, \cdot, 0$, $k=0, 1, \cdots, K$,
satisfying the terminal condition $ \tv(k,m)= (\zeta \cdot (k \Delta z-K_1))^+$ and appropriate boundary conditions.  The
coefficients $p_{k,m}^+$, $p_{k,m}^-$ and $p_{k,m}^0$ are given by
\begin{equation}
\begin{split}
p_{k,m}^+&=\frac{1}{4}\left[\sigma^2 \left(\frac{q_{m \Delta t}-k \Delta
z}{ \Delta z}\right)^2-\mu \frac{q_{m \Delta t}-k \Delta z}{ \Delta
z}\right] \Delta t,
\\p_{k,m}^-&=\frac{1}{4}\left[\sigma^2 \left(\frac{q_{m \Delta t}-k \Delta z}{ \Delta z}\right)^2 + \mu \frac{q_{m \Delta t}-k \Delta z}{ \Delta z}\right] \Delta t,
\\p_{k,m}^0&=p_{k,m}^++p_{k,m}^-+\frac{1}{2} \lambda \xi \Delta t.
\end{split}
\end{equation}

In (\ref{eq:tv}), $\tilde{P}$ is a linear operator which is the
discrete version of the operator $P$ in (\ref{eq:op-P}). Letting
$x=\log{y}$, we can write $Pf$ as
\begin{equation}\label{eq:op-P-log}
Pf(z, t) = \int_{\R} f\left(\frac{z}{e^x} + q_t
\frac{e^x-1}{e^x},t\right) e^x F(dx),
\end{equation}
in which $F(dx)$ is the distribution of a random variable $X$,
such that the distribution of $e^X$ is given by the jump measure $\nu$. We
approximate the integral in (\ref{eq:op-P-log}) using trapezoidal
rule. Discretizing a sufficiently large interval $[x_{min},
x_{max}]$ into $L$ subintervals, we obtain the grid $x_{min} =x_0 <x_1
< \cdots < x_L = x_{max}$. This grid may not be equally spaced.
One can choose the grid to be finer where density of the distribution $F$ is
large. The left hand side of (\ref{eq:op-P-log}) will be evaluated
on grid point $(z_k, m\Delta t)$.  However for some $m, k$ and
$\ell$, $z_k / e^{x_{\ell}}+ q_{m \Delta t} (e^{x_{\ell}}-1)/ e^{x_{\ell}}$,
as the first variable of $f$ in the integrand, may not land on
$z_{k'}$ for some $k'$. Consequently, we will determine the value
of the integrand in (\ref{eq:op-P-log}) by linear interpolation.
If
\[
z_{k'} \leq \frac{z_k}{e^{x_{\ell}}} + q_{m \Delta t}
\frac{e^{x_{\ell}}-1}{e^{x_{\ell}}} \leq z_{k'+1}, \quad \text{
for some } k'
\]
then
\[
\tilde{f}\left(\frac{z_k}{e^{x_{\ell}}}+ q_{m \Delta t}
\frac{e^{x_{\ell}}-1}{e^{x_{\ell}}}, m\Delta t\right) =
(1-w)f(z_{k'} , m\Delta t) + w f(z_{k'+1}, m\Delta t) + O((\Delta
z)^2),
\]
in which $w$ is the interpolation weight. On the other hand, if
$z_k / e^{x_{\ell}}+ q_{m \Delta t} (e^{x_{\ell}}-1)/ e^{x_{\ell}}$ is out
side the interval $[z_{min}, z_{max}]$, the value of the function
is determined by the boundary conditions. Now the integral in
(\ref{eq:op-P-log}) can be approximated as

\begin{equation}\label{eq:numint}
\begin{split}
&\int_{\R} f\left( \frac{z}{e^x} + q_t \frac{e^x-1}{e^x},t
\right)e^x F(dx) \\
&= \sum_{\ell=0}^{L-1} \frac12
\left[\tilde{f}\left(\frac{z_k}{e^{x_{\ell}}}+q_{m \Delta t}
\frac{e^{x_{\ell}}-1}{e^{x_{\ell}}}, m\Delta
t\right)e^{x_{\ell}}g(x_{\ell}) +
\tilde{f}\left(\frac{z_k}{e^{x_{\ell+1}}}+q_{m \Delta t}
\frac{e^{x_{\ell+1}}-1}{e^{x_{\ell+1}}}, m\Delta
t\right)e^{x_{\ell+1}}g(x_{\ell+1})\right]\\
& \hspace{1.5cm} \cdot (x_{\ell+1}-x_{\ell}) + O((\Delta x)^2),
\end{split}
\end{equation}
where $\Delta x = \max_{\ell}{(x_{\ell+1}-x_{\ell})}$, and $g$ is the density of $F$.

Note that numerically solving the system of equations in
(\ref{eq:tv}) is quite difficult due to the contributions from the
integral terms (i.e. the $\tP \tv(k,m)$ term). Discretizing 
(\ref{eqn-vn-1}) recursively (using the Crank-Nicolson
discretization) we obtain 
sequence $\tv_n(k,m)$, $n=0,1,
\cdots$, by setting $\tv_0(k,m)=(\zeta \cdot (k \Delta z-K_1))^+$
through the recursive relationship
\begin{equation} \label{eq:tvn}
\begin{split}
&(1+p_{k,m}^0)\tv_{n+1}(k,m)-p_{k,m}^+ \tv_{n+1}(k+1,m)-p_{k,m}^-
\tv_{n+1}(k-1,m)\\
&\hspace{2cm} =p_{k,m+1}^+ \tv_{n+1}(k+1,m+1) + p_{k,m+1}^-
\tv_{n+1}(k-1,m+1)
+(1-p_{k,m+1}^0)\tv_{n+1}(k,m+1)\\
&\hspace{2.3cm}+\frac{1}{2}\lambda \Delta t
\left[\left(\widetilde{P} \tv_n\right)(k,m+1)+\left(\widetilde{P}
\tv_n\right)(k,m)\right].
\end{split}
\end{equation}
For each $n$ the terminal condition $\tv_{n+1}(k,m)= (\zeta \cdot
(k \Delta z-K_1))^+$ and appropriate boundary conditions are
satisfied. We will solve the sparse linear system of equations using the SOR method (see e.g. \cite{wilmott}).

\subsection{Convergence of the Numerical Algorithm} \label{sec:convnum}
In what follows we will first show that as $n \rightarrow \infty$,
$\tv_n$ converges to $\tv$. Next, we will argue that as the mesh
sizes $\Delta t$ and $\Delta z$ go to zero $\tv$ converges to
$v_{\infty}$. For the sake of simplicity of the presentation, in
what follows we will assume that $(\tP 1) (k,m) \leq \int_{R_+} y
\nu(dy) = \xi$. Otherwise the order of error of the discretization
of the integral will have to be sufficiently small for the
following statement to be true.

\begin{proposition}\label{prop:disc-iter-conv}
For sufficiently  small $\Delta t$ and $\Delta z$, $\tilde{v}_n$
converges to $\tv$ uniformly and at an exponential rate.
\end{proposition}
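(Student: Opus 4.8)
The plan is to show that the sequence $(\tv_n)_{n\geq 0}$ defined by the recursion (\ref{eq:tvn}) is a contraction-type iteration at the discrete level, mirroring the continuous-time argument behind Theorem~\ref{theorem-2}(ii). First I would fix a pair of consecutive iterates and subtract the equations (\ref{eq:tvn}) for $n+1$ and $n$: since the terminal and boundary data are identical for every $n$, the difference $e_n(k,m) \triangleq \tv_{n+1}(k,m) - \tv_n(k,m)$ satisfies the \emph{same} linear system as (\ref{eq:tvn}) but with zero terminal/boundary conditions and with the inhomogeneous term $\tfrac12 \lambda \Delta t [ (\tP e_{n-1})(k,m+1) + (\tP e_{n-1})(k,m)]$. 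So the whole argument reduces to bounding the solution operator of the homogeneous Crank--Nicolson scheme (the $L^\infty\to L^\infty$ norm of one backward time step, including the implicit solve) and bounding $\tP$ in $L^\infty$.

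The key steps, in order, would be: (1) Show that for $\Delta t,\Delta z$ sufficiently small, the coefficients $p_{k,m}^{\pm}$ are nonnegative (this needs $\Delta z$ small enough that the diffusion term dominates the drift term, i.e. $\sigma^2\,|q_{m\Delta t}-k\Delta z|/\Delta z \geq |\mu|$ on the grid — here the unbounded coefficient $q_t - z$ is actually helpful away from $z=q_t$, and near $z=q_t$ both coefficients are small), so that the scheme is of positive type and the implicit matrix $I + (\text{tridiagonal with nonneg.\ off-diagonals summing to } p^0_{k,m} - \tfrac12\lambda\xi\Delta t)$ is an M-matrix; hence its inverse is nonnegative and one gets a discrete maximum principle. (2) Using the discrete maximum principle and the row-sum identity $p^0_{k,m} = p^+_{k,m} + p^-_{k,m} + \tfrac12\lambda\xi\Delta t$, show that one step of the homogeneous scheme maps the sup-norm by a factor $\leq \dfrac{1 - \tfrac12\lambda\xi\Delta t}{1 + \tfrac12\lambda\xi\Delta t} \leq 1$, i.e. the scheme is $L^\infty$-stable with a genuine exponential damping $e^{-\lambda\xi\Delta t}$ per step coming from the zeroth-order term. (3) Combine with $\|\tP g\|_\infty \leq (\tP 1)\,\|g\|_\infty \leq \xi\,\|g\|_\infty$ (the standing assumption stated just before the proposition, which also controls the linear interpolation and the trapezoidal weights so that $\tP$ is a nonnegative operator of mass $\leq \xi$). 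Propagating the bound through all $M$ time steps, the inhomogeneous term contributes at most $\tfrac12\lambda\Delta t\cdot 2\xi\,\|e_{n-1}\|_\infty$ per step accumulated over the $T/\Delta t$ steps with the damping factors, which gives $\|e_n\|_\infty \leq \kappa\,\|e_{n-1}\|_\infty$ with a constant $\kappa = \kappa(\lambda,\xi,T) < 1$ that is \emph{uniform in} $\Delta t,\Delta z$ (the $\Delta t$'s telescope against the number of steps, and the damping beats the growth exactly as $1 - e^{-\lambda\xi T} < 1$ beats $\lambda\xi T$, just as in (\ref{eq:uniform-conv})). Finiteness of $\|e_0\|_\infty = \|\tv_1 - \tv_0\|_\infty$ on the truncated grid $[\zm,\zM]$ is immediate since everything is a finite vector. (4) Conclude $\|\tv_{n} - \tv_m\|_\infty \leq \kappa^{\min(n,m)}(1-\kappa)^{-1}\|e_0\|_\infty$, so $(\tv_n)$ is Cauchy in $L^\infty$, its limit $\bar v$ solves (\ref{eq:tv}) by passing to the limit in (\ref{eq:tvn}) (using continuity of the finite-dimensional maps $\tP$ and the linear solve), and $\bar v = \tv$ by uniqueness of the solution of (\ref{eq:tv}), which follows from the M-matrix property established in step~(1). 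The exponential rate $\|\tv_n - \tv\|_\infty \leq C\kappa^n$ is then exactly step~(4) with $m\to\infty$.

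The main obstacle is step~(1)--(2): making the discrete stability estimate genuinely \emph{uniform} in the mesh and genuinely \emph{contractive} rather than merely bounded. The subtlety is that the Crank--Nicolson step is not a pure averaging — the explicit half-step has coefficient $1 - p^0_{k,m+1}$ which need not be nonnegative unless $p^0_{k,m+1} \leq 1$, forcing the ``sufficiently small $\Delta t$'' hypothesis (a CFL-type condition $p^+_{k,m}+p^-_{k,m}+\tfrac12\lambda\xi\Delta t \leq 1$, which because of the unbounded coefficient must be read on the truncated domain $[\zm,\zM]$ where $|q_t-z|$ is bounded). Once that is arranged, the explicit half-step is also a substochastic-plus-damping operator, and composing it with the inverse of the implicit M-matrix gives the clean factor $\frac{1-\tfrac12\lambda\xi\Delta t}{1+\tfrac12\lambda\xi\Delta t}$ per step; I would carry out this composition carefully, since it is where the exponential (rather than merely uniform) bound is born. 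The interpolation/quadrature error terms $O((\Delta z)^2), O((\Delta x)^2)$ play no role here because they cancel in the difference $e_n$ — they only enter the separate Proposition~\ref{prop:disc-conv} on $\tv \to v_\infty$.
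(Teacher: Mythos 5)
Your proposal is correct and uses the same central mechanism as the paper: positivity of $p^{\pm}_{k,m}$ and $1-p^0_{k,m}$ after a CFL-type restriction, the resulting discrete maximum principle, the per-step damping factor $\theta=\dfrac{1-\tfrac12\lambda\xi\Delta t}{1+\tfrac12\lambda\xi\Delta t}$ coming from the zeroth-order term, the bound $\|\tP g\|_\infty\le\xi\|g\|_\infty$, and the geometric accumulation over the $M$ time levels giving the contraction constant $1-\theta^M<1$. The one structural difference is in the bookkeeping: the paper sets $e_n=\tv-\tv_n$ directly, shows $e_n$ satisfies the same difference scheme, and obtains $E_{n+1}\le(1-\theta^M)E_n$ in one pass, whereas you work with consecutive increments $\tv_{n+1}-\tv_n$, conclude the sequence is Cauchy, and then need the extra step of identifying the limit with $\tv$ via uniqueness of (\ref{eq:tv}) (which you correctly trace to the M-matrix property). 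That extra step is slightly longer, but it buys something the paper's proof leaves implicit: existence of a solution $\tv$ to (\ref{eq:tv}) itself, rather than assuming it at the outset. One small caveat: your phrasing that $\kappa$ is ``uniform in $\Delta t,\Delta z$'' overstates matters — $\kappa=1-\theta^M$ depends on $\Delta t$ and only converges to $1-e^{-\lambda\xi T}$ as $\Delta t\to0$; what actually holds, and all the proposition needs, is that $\kappa<1$ for each admissible mesh.
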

\begin{proof}
Let us define
\begin{equation}
e_{n}(k,m) \triangleq \tv(k,m)-\tv_n(k,m).
\end{equation}
Since $\tP$ is a linear operator $e_n$ will satisfy (\ref{eq:tvn})
when we replace $\tv_n$ by $e_n$ and $\tv_{n+1}$ by $e_{n+1}$. Now
let us define
\begin{equation}
E_n^m=\max_{k}|e_n(k,m)|, \quad E_n=\max_{m,k}|e_n(k,m)|.
\end{equation}
Choosing $\Delta t$ and $\Delta z$  small enough we can guarantee
that $p^+_{k,m}$, $p^-_{k,m}$ and $1-p^0_{k,m}$ are positive for all $(k,m)$. As a result it
follows from the difference equation $e_n$ satisfies that
\begin{equation}\label{eq:err}
\begin{split}
(1+p_{k,m}^0)\left|e_{n+1}(k,m)\right|  &\leq (p_{k,m}^++p_{k,m}^-) E_{n+1}^m +
(p_{k,m+1}^++p_{k,m+1}^-+(1-p_{k,m+1}^0))E_{n+1}^{m+1}+\lambda
\Delta t \xi E_n \\
& = (p_{k,m}^++p_{k,m}^-) E_{n+1}^m +\left(1-\frac12 \lambda \xi \Delta t
\right) E_{n+1}^{m+1} + \lambda \Delta t \xi E_n,
\end{split}
\end{equation}
in which we used the assumption that $(\tP 1) (k,m) \leq \xi$. It follows from (\ref{eq:err}) that 
\begin{equation}\label{eq:err2}
\begin{split}
  (1+p_{k,m}^0)\left|e_{n+1}(k,m)\right|  - \left(p^0_{k,m} - \frac12 \lambda\xi\Delta t \right) E_{n+1}^m \leq \left(1-\frac12 \lambda \xi \Delta t
\right) E_{n+1}^{m+1} + \lambda \Delta t \xi E_n.
\end{split}
\end{equation}
Let $k^*$ be such that $|e_{n+1}(k^*,m)| = E_{n+1}^m$.
Since the right-hand-side of (\ref{eq:err2}) does not depend on
$k$, we can take $k=k^*$ on the left-hand-side to write
\begin{equation}\label{eq:recur-ineq}
 E_{n+1}^m \leq \theta E_{n+1}^{m+1}+ (1-\theta) E_{n},
\end{equation}
in which
\begin{equation}
\theta \triangleq \frac{1-1/2 \cdot \lambda \xi \Delta t}{1+1/2
\cdot \lambda \xi \Delta t} \in (0,1],
\end{equation}
as a result of the assumption $p^+_{k,m}$, $p^-_{k,m}$ and $1-p^0_{k,m}$ are
positive for all $(k,m)$. It follows from (\ref{eq:recur-ineq}) that
\begin{equation}\label{eq:Err}
E_{n+1}^m \leq \theta^{M-m}E_{n+1}^M+(1-\theta)(1+\theta+\cdots+\theta^{M-m-1})E_n.
\end{equation}
Because of the terminal condition of $\tilde{v}_n$, we have
$E_{n+1}^M=0$. In addition, (\ref{eq:Err}) is satisfied for all
$m$ we get that
\begin{equation}\label{eq:disc-err-conv-rate}
E_{n+1} \leq (1-\theta^M)E_n.
\end{equation}
As a result
\begin{equation}\label{eq:conv-rate}
E_n \leq (1-\theta^M)^n E_0 \rightarrow 0 \quad \text{as $n \rightarrow \infty$}.
\end{equation}
\end{proof}
\begin{remark}\label{eq:etalim}
As $M \rightarrow \infty$
\begin{equation}\label{eq:lim-theta}
1- \theta^M=1-\left(\frac{1-1/2\cdot \lambda \xi \cdot T/M}{1+1/2\cdot
\lambda \xi \cdot T/M}\right)^M \rightarrow 1-e^{-\lambda \xi T},
\end{equation}
which shows that the convergence rate in (\ref{eq:conv-rate}) agrees with the convergence rate in (\ref{eq:uniform-conv}).
\end{remark}

\begin{proposition}\label{prop:disc-conv}
\begin{equation}
|v_{\infty}(z_k, m \Delta t)-\tv(k,m)| \rightarrow 0
\end{equation}
as $\Delta z, \Delta t, \Delta x \rightarrow 0$.
\end{proposition}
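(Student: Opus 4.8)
The plan is to compare $\tv$ with $v_{\infty}$ through the two intermediate families $v_n$ and $\tv_n$, and to interchange the limits $n\to\infty$ and (mesh)$\,\to 0$, thereby reusing Proposition~\ref{prop:disc-iter-conv} and Theorem~\ref{theorem-2}. Fix the truncation parameters $\zm,\zM$ (and $x_{\min},x_{\max}$), put $\mathcal{D}=[\zm,\zM]$, and for any $n\geq 0$ write
\begin{equation*}
|v_{\infty}(z_k,m\Delta t)-\tv(k,m)|\leq|v_{\infty}(z_k,m\Delta t)-v_n(z_k,m\Delta t)|+|v_n(z_k,m\Delta t)-\tv_n(k,m)|+|\tv_n(k,m)-\tv(k,m)|.
\end{equation*}
The first term is bounded, \emph{uniformly in the mesh}, by Theorem~\ref{theorem-2}(ii): it is at most $M_{\mathcal{D}}(1-e^{-\lambda\eta T})^n$. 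For the third term, the proof of Proposition~\ref{prop:disc-iter-conv} gives $\max_{k,m}|\tv_n(k,m)-\tv(k,m)|\leq(1-\theta^M)^n E_0$; using the elementary inequality $\log\frac{1-u}{1+u}\geq-2u-Cu^3$ with $u=\tfrac12\lambda\xi\Delta t$ one sees that $\theta^M=\theta^{T/\Delta t}$ stays bounded below by a positive constant for all $\Delta t$ below a threshold $\delta_0$, so $1-\theta^M\leq\rho<1$ uniformly in such meshes (compare Remark~\ref{eq:etalim}); and the discrete maximum principle underlying that proof (positivity of $p^+_{k,m},p^-_{k,m},1-p^0_{k,m}$) bounds both $\tv$ and $\tv_0$ by their terminal/boundary data, hence $E_0\leq C_{\mathcal{D}}$ with $C_{\mathcal{D}}$ mesh-independent. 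Thus the third term is at most $\rho^n C_{\mathcal{D}}$, uniformly for $\Delta t\leq\delta_0$.

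Given $\eps>0$, first choose $n$ so large that $M_{\mathcal{D}}(1-e^{-\lambda\eta T})^n<\eps/3$ and $\rho^n C_{\mathcal{D}}<\eps/3$; this choice is independent of the mesh. With $n$ now fixed, it remains to send $\Delta z,\Delta t,\Delta x\to 0$ in the middle term. We argue by induction on $n$; the base case is trivial since $\tv_0(k,m)=v_0(z_k,m\Delta t)$ on grid points. For the inductive step, $v_n$ is the classical ($C^{2,1}$) solution of the \emph{linear} parabolic problem (\ref{eqn-vn-1})--(\ref{eqn-vn-2}), whose inhomogeneous term $\lambda\,Pv_{n-1}$ is already determined, and $\tv_n$ is its Crank--Nicolson discretization (with $\tP\tv_{n-1}$ in place of $Pv_{n-1}$). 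The error $\tv_n-v_n$ solves the same difference scheme with a forcing built from: (a) the $O((\Delta t)^2)$ Crank--Nicolson time-truncation error; (b) the $O((\Delta z)^2)$ central-difference error for $\A(t)$, whose coefficients $-\mu(q_t-z)$ and $\tfrac12\sigma^2(q_t-z)^2$ are bounded on $\mathcal{D}\times[0,T]$; (c) the quadrature/interpolation consistency error $\tP v_{n-1}-Pv_{n-1}$ from (\ref{eq:numint}), of order $O((\Delta x)^2)+O((\Delta z)^2)$; (d) the term $\tP(\tv_{n-1}-v_{n-1})$, which vanishes by the induction hypothesis together with the boundedness of the positive operator $\tP$ (namely $(\tP 1)(k,m)\leq\xi$); and (e) the localization mismatch at $\partial\mathcal{D}$ and at $x_{\min},x_{\max}$. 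Running the same recursive estimate as in Proposition~\ref{prop:disc-iter-conv} — with this forcing replacing the $\lambda\Delta t\xi E_n$ term — yields $\max_{k,m}|\tv_n(k,m)-v_n(z_k,m\Delta t)|\leq C(T)\cdot(\text{sup of the forcing})\to 0$, so a fine enough mesh makes the middle term $<\eps/3$. Finally, letting $\zm\to-\infty$, $\zM\to+\infty$ (and similarly $x_{\min},x_{\max}$) removes the localization, because $v_{\infty}$ has at most linear growth in $z$ (Lemma~\ref{lemma:vinf-lip}) and $\int_{\R_+}y\,\nu(dy)=\xi<\infty$ makes the truncated tails of $Pv_{\infty}$ negligible; under the natural assumption that the imposed boundary conditions agree with $v_{\infty}$ on $\partial\mathcal{D}$ in the limit, the boundary contribution in (e) disappears as well.

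The main obstacle is steps (b)--(c): making these consistency estimates rigorous needs bounds on higher $z$-derivatives of $v_n$ (and their growth) beyond the bare $C^{2,1}$-regularity of Theorem~\ref{theorem-2}(iii). These follow by bootstrapping interior parabolic Schauder estimates for (\ref{eqn-vn-1}): since $v_{n-1}$ is Lipschitz in $z$ and Hölder in $t$, the structure of $P$ (with $\nu$ a probability measure, and a mild moment condition such as $\E[1/Y]<\infty$ for the second derivative) makes $Pv_{n-1}$ Hölder, so $v_n\in C^{2+\alpha,1+\alpha/2}$ locally, and iterating raises the order as needed for the $O((\Delta z)^2)$ and $O((\Delta x)^2)$ rates. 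If one only wants the stated (rate-free) convergence, it suffices to observe that all truncation errors are $o(1)$ uniformly on $\mathcal{D}\times[0,T]$ by uniform continuity of the bounded derivatives that do exist, while the stability constant $C(T)$ from the discrete maximum principle is mesh-independent — this already gives the claim with no CFL-type restriction relating $\Delta z$, $\Delta t$, $\Delta x$. A secondary technical point is quantifying the localization step via the linear-growth bound on $v_{\infty}$.
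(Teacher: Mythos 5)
Your decomposition is identical to the paper's: insert $v_n$ and $\tv_n$ between $v_{\infty}$ and $\tv$, bound the two outer terms uniformly in the mesh via (\ref{eq:uniform-conv}) and (\ref{eq:disc-err-conv-rate}), show the middle term vanishes as the mesh shrinks with $n$ held fixed, then let $n\to\infty$. The difference is one of rigor rather than route. The paper dispatches the middle term in a single clause --- $n\cdot O((\Delta t)^2 + (\Delta z)^2 + (\Delta x)^2)$, appealing to standard Crank--Nicolson consistency and linear error propagation --- whereas you expand it into an induction on $n$ using the same discrete maximum principle that underlies Proposition~\ref{prop:disc-iter-conv}. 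In doing so you correctly flag two points the paper's one-sentence treatment glosses over: (i) the advertised $O((\Delta z)^2)$ spatial consistency needs $v_n$ to possess bounded higher $z$- and $t$-derivatives, which does not follow from the bare $C^{2,1}$ regularity of Theorem~\ref{theorem-2}(iii) and would require the Schauder bootstrap you sketch; and (ii) the difference scheme is posed on a truncated domain $[\zm,\zM]$ with imposed boundary data, giving a localization error the paper never mentions. Your fallback --- that the stated (rate-free) convergence needs only $o(1)$ consistency from uniform continuity of the existing derivatives plus mesh-independent discrete stability, with the localization controlled by the linear growth of $v_\infty$ --- is a sound way to get the qualitative claim without proving the second-order rate. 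What the paper's terser argument buys is the quantitative rate used in the subsequent Remark; what yours buys is a gap-free justification of exactly the statement asserted in the proposition.
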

\begin{proof}
Using the triangle inequality let us write
\begin{equation}\label{eq:err-bdd}
\begin{split}
|v_{\infty}(z_k, m \Delta t)-\tv(k,m)| &\leq |v_{\infty}(z_k, m \Delta t)-v_n(z_k, m \Delta t)|+|v_n(z_k, m \Delta t)-\tv_n(k,m)|\\& \quad +|\tv_n(k,m)-\tv(k,m)|
\\ &\leq  C \left(1-e^{-\lambda \eta(T-m\Delta t)}\right)^n+ n \cdot O((\Delta t)^2+(\Delta z)^2+(\Delta x)^2)+\widetilde{C} (1-\theta^M)^n,
\end{split}
\end{equation}
for some positive constants $C$ and $\widetilde{C}$. The first and the third terms in the right-hand-side of the second inequality are due to (\ref{eq:uniform-conv}) and (\ref{eq:disc-err-conv-rate}). The second term arises since the order of error from discretizing a PDE using a Crank-Nicolson scheme is $O((\Delta z)^2+(\Delta t)^2)$, the interpolation and the discretization error from the numerical integration are of order $(\Delta z)^2$ and $(\Delta x)^2$ and that the total error made at each step propagates at most linearly in $n$ when we sequentially discretize the PDEs in (\ref{eqn-vn-1}).

Letting $\Delta t, \Delta z \rightarrow 0$ in (\ref{eq:err-bdd}) we obtain that
\begin{equation}
\lim_{\Delta t, \Delta z \rightarrow 0} |v_{\infty}(z_k, m \Delta t)-\tv(k,m)| \leq C \left(1-e^{-\lambda \eta T}\right)^n+ \widetilde{C} \left(1-e^{-\lambda \xi T}\right)^n,
\end{equation}
in which we used (\ref{eq:lim-theta}). Since $n$ is arbitrary the
result follows.
\end{proof}

\begin{remark}
The proof of Proposition~\ref{prop:disc-conv} leads itself to an
order of convergence.  In (\ref{eq:err-bdd}) we can choose
$n=O(\log(1/[(\Delta t)^2+(\Delta z)^2+(\Delta x)^2]))$, which would guarantee
that the first and the third terms on the right-hand-side are of
order $O((\Delta t)^2+(\Delta z)^2+(\Delta x)^2)$, and the right-hand-side
becomes of order $[(\Delta t)^2+(\Delta z)^2+(\Delta x)^2] \log(1/[(\Delta
t)^2+(\Delta z)^2+(\Delta x)^2])$. Note that this order of convergence is
better than that of $O((\Delta t)^{2-\gamma}+(\Delta
z)^{2-\gamma}+(\Delta x)^{2-\gamma})$ for any $\gamma>0$.
\end{remark}

\subsection{Numerical results for Kou's and Merton's models}\label{Sec:3-2}
There are two well-known examples of jump diffusion in the
literature, the double exponential model as in \cite{kou} and the
normal model as in \cite{merton}. In this section, we will
demonstrate our algorithm in Section \ref{sec:3.1} in pricing
Asian options for these two models. We will introduce the jump
distributions chosen by \cite{kou} and \cite{merton} next. Let $X$
be a random variable whose probability distribution function is
equal to a given distribution $F$ and let the jump measure $\nu$
be equal to the distribution of the random variable $e^{X}$. In
Kou's model, $F$ is the double exponential distribution whose
density is
\begin{equation}\label{eq:den-douexp}
F(dx) = \left(p\,\eta_1 e^{-\eta_1 x}1_{\{x\geq 0\}} +
(1-p)\,\eta_2 e^{\eta_2 x} 1_{\{x<0\}}\right) dx.
\end{equation}
In Merton's model, $F$ is the normal distribution whose density is
\begin{equation}\label{eq:den-normal}
F(dx) = \frac{1}{\sqrt{2\pi \tilde{\sigma}^2}}
\exp\left(\frac{-(x-\mu)^2}{\tilde{\sigma}^2}\right)dx.
\end{equation}

The price of the Asian option with floating strike $K_1$ and fixed
strike $K_2$, whose pay-off function is given by (\ref
{eq:def-asian-price-P}), can be calculated in terms of
$v_{\infty}(z,0)$ as a result of Corollary~\ref{theorem:main}. On
the other hand, in Proposition \ref{prop:disc-conv} we have shown
that $v_{\infty}$ can be approximated by its
discrete version $\tilde{v}$. Furthermore, Proposition
\ref{prop:disc-iter-conv} tells us that $\tilde{v}$ can be
approximated by the sequence $(\tilde{v}_n)_{n\geq 0}$ with
exponentially fast convergence. Therefore, we will approximate the
price of Asian options by iteratively solving a sequence of
difference equations in (\ref{eq:tvn}), which are discretization
of a sequence of parabolic partial differential equations (not
integro-differential equations) given by (\ref{eqn-vn-1}) and
(\ref{eqn-vn-2}).

In the following, we will list the numerical results for the
prices of Asian options. Since we could not find any numerical
results on European Asian options for jump diffusion models in the
literature, we use the Put-Call parity for Asian options as a
consistency check for our results. For the European Asian option
with floating strike $K_1$ and fixed strike $K_2$, the put - call
parity gives the following identity between call and put option
price
\begin{equation}\label{eq:p-cparity}
C(S_0, 0) -P(S_0, 0) = \frac{1}{rT}(1-e^{-rT})S_0 - K_1 S_0 -
e^{-rT}K_2.
\end{equation}
This identity does not depend on dynamics of the underlying
process $S$. Using our algorithm, we will calculate the call and
put option price independently. Then we will compare the
difference between our call and put price with the difference
predicted by (\ref{eq:p-cparity}). In addition, we will also
compare our numerical results with the Monte Carlo results.

In Tables 1 and 2, we will list the numerical results for the
prices of European Asian options for both Kou's model and Merton's
model. Run times are in seconds. All our computations are
performed on a Pentium IV 3.0 GHz machine with C++ implementation.
In Tables 3, 4 and 5, we will list the convergence results of the
for the Asian option prices for the double exponential jump model.
The parameters for both call and put options are the same as the
7th row in Table 1, i.e. $r=0.15$, $S_0=100$, $K_1=0$, $K_2=90$,
$T=1$, $\sigma=0.2$, $\lambda=1$ and $\eta_1=\eta_2=25$.

As we can see from these tables, our algorithm is stable with
respect to all parameters and the convergence is fast. Moreover,
our difference between call and put option prices are within $\pm
0.01$ comparing to the difference predicted by put-call parity in
(\ref{eq:p-cparity}). The call option prices are almost within the
standard error of the Monte Carlo results.

If the dynamics of $Z^J$ only contains the diffusion part (i.e.
$\lambda=0$), our algorithm is simply SOR. Using the same
parameters chosen in \cite{vecer}, the SOR gives approximate
option price with error $\pm1\times 10^{-3}$ dollar (we compared
our results to Table 2 in \cite{vecer}), and the run times are
below 0.02 second. Comparing the numerical results in Tables 1 and
2, we see that the evaluation of the integral term by numerical
integral is the time consuming part. This can be speeded up by
using the Fast Fourier Transform (see e.g. \cite{A-O} for the
application of the Fast Fourier Transform in American options
under the Variance Gamma model).\section{Mathematical Analysis}

The purpose of this section is to provide the necessary background to prove Theorem~\ref{theorem-2}.
First, in Section~\ref{sec:propJ} we will analyze the properties of functional operator $J$: We will analyze how $J$ increases the regularity of certain class of functions. $J$ takes functions with certain regularity properties into the unique solutions of parabolic differential equations and gives them more regularity (see Lemmas~\ref{lemma:lip}-\ref{lemma:holder} and Proposition~\ref{theorem-1}). 

Next, in Section~\ref{sec:analyze-vn}, we will develop the properties of the functions defined in (\ref{eq:def-vn}) in a sequence of lemmas and corollaries using the results developed in Section~\ref{sec:propJ}. These properties will then be used  to prove Theorem~\ref{theorem-2} in Section~\ref{sec:proof}.

\subsection{Properties of Operator $J$}\label{sec:propJ}
First, we will develop a representation of the functional operator $J$ that is amenable to regularity analysis, which is carried out in this section.
Using the notation in page 8 of \cite{pham1}, we can rewrite $J$ as
\begin{equation}\label{eq:def-J-exp}
Jf(z,t)= \E^{\Q} \left\{ e^{-\lambda\xi (T-t)} \left(\zeta \cdot \left(Z_{T-t}^{t,z} - K_1\right)\right)^+ + \int_0^{T-t}e^{-\lambda\xi s}\lambda \cdot Pf\left(Z_s^{t,z}, t+s\right) ds\right\},
\end{equation}
in which
\begin{equation}\label{eq:def-pf}
Pf\left(Z_s^{t,z} , t+s\right) = \int_{\R^+} f\left(\frac{Z_s^{t,z}}{y} + q_{t+s} \frac{y-1}{y}, t+s\right) y \nu(dy).
\end{equation}
the process $Z^{t,z} = \{Z_{s}^{t,z}; s \geq 0\}$
has the dynamics
\begin{equation}\label{eq:z-t}
dZ_{s}^{t,z} = -\mu (q_{t+s}-Z_{s}^{t,z})ds +
\sigma(q_{t+s}-Z_{s}^{t,z})dW_{s}, \quad Z_0^{t,z}=z,
\end{equation}
where $\{W_s\}_{s\geq 0}$ is a Wiener process under the measure $\Q$.
It is possible to determine the solution to (\ref{eq:z-t}) explicitly. For this purpose it will be convenient
to work with the process $\tilde{Z}_s \triangleq q_{t+s}-Z_s^{t,z}$. It follows from
(\ref{eq:z-t}) that the dynamics of $\tilde{Z}$ are given by
\begin{equation}\label{eq:tilde-z-t}
d\tilde{Z}_{s} = \mu \tilde{Z}_{s} ds - \sigma \tilde{Z}_{s}
dW_{s} + g(t+s) ds, \quad \tilde{Z}_0=\tilde{z}=q_t-z,
\end{equation}
in which $g(t)= \frac{d}{dt}q_t$. Now it is easy to obtain the solution of stochastic
differential equation (\ref{eq:tilde-z-t}) as
\begin{equation}\label{sol tilde-z-t}
\tilde{Z}_{s} = \tilde{z}H_s^0 + \int_0^{s} H_{s-v}^0
\,g(t+v) dv \quad \text{for } s \geq 0,
\end{equation}
in which
\begin{equation}
H_s^0 \triangleq \exp((\mu-\frac12 \sigma^2)s - \sigma W_s)
\label{eq:def-h}.
\end{equation}
As a result we have that
the solution of (\ref{eq:z-t}) is given by
\begin{equation}\label{sol z-t}
Z_{s}^{t,z} = z H_s^0 + b_s, \quad s \geq 0,
\end{equation}
in which
\begin{equation}
b_s \triangleq q_{t+s} -q_t H_s^0 - \int_0^{s} H_{s-v}^0
\,g(t+v) dv \label{eq:def-b}.
\end{equation}
It follows from (\ref{sol z-t}) that the solution of
the stochastic differential equation (\ref{eq:z-t}) is linear with
respect to its initial value $z$. Inserting its solution (\ref{sol
z-t}) back into the definition of the operator $J$ in
(\ref{eq:def-J-exp}), we obtain
\begin{equation}\label{eq:def-j-z}
Jf(z,t) =
\E^{\Q}\left\{e^{-\lambda\xi(T-t)}\left(\zeta\cdot\left(zH_{T-t}^0
+ b_{T-t}-K_1\right)\right)^+ + \int_0^{T-t}
e^{-\lambda\xi s}\lambda\cdot
Pf(zH_s^0+b_s,t+s)\,ds\right\}.
\end{equation}

In the following, we will study the regularity properties of the
operator $J$ with respect to both space and time. When the
function $f$ is Lipschitz continuous with respect to its first
variable, the following lemmas show $Jf$ is not only Lipschitz
with respect to its first variable, but also H\"{o}lder continuous
with respect to the second variable.

\begin{lemma}\label{lemma:lip}
For any $t \in [0,T]$, let us assume the function $f$ satisfies
\begin{equation}\label{eq:lip-f}
|f(z,t)-f(\tilde{z},t)|\leq D\,|z-\tilde{z}|, \quad  z,\tilde{z}
\in \R,
\end{equation}
for a positive constant $D$ that only depends on $T$.
Then $Jf$ satisfies
\begin{equation}\label{eq:lip-jf}
|Jf(z,t)-Jf(\tilde{z},t)|\leq E\, |z-\tilde{z}|, \quad z,\tilde{z}
\in \R,
\end{equation}
with $E=\max\{1,D\}$.
\end{lemma}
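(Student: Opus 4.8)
The plan is to argue directly from the explicit representation (\ref{eq:def-j-z}), which writes $Jf(z,t)$ as an expectation of a functional that is affine in $z$ through the geometric-Brownian-type factor $H^0$. Fix $t\in[0,T]$ and two initial points $z,\tilde z\in\R$, and estimate the difference inside the expectation term by term. For the terminal payoff the map $x\mapsto(\zeta\cdot(x-K_1))^+$ is $1$-Lipschitz for either $\zeta\in\{-1,1\}$, so
\[
\left|\bigl(\zeta\cdot(zH_{T-t}^0+b_{T-t}-K_1)\bigr)^+-\bigl(\zeta\cdot(\tilde zH_{T-t}^0+b_{T-t}-K_1)\bigr)^+\right|\le H_{T-t}^0\,|z-\tilde z|.
\]
For the running term, using the form (\ref{eq:op-P}) of $P$, the Lipschitz hypothesis (\ref{eq:lip-f}) on $f$, and the fact that $\nu$ is a probability measure, the weights $y$ in the integral exactly cancel the $1/y$ appearing in the argument of $f$:
\[
\left|Pf(zH_s^0+b_s,t+s)-Pf(\tilde zH_s^0+b_s,t+s)\right|\le\int_{\R^+}D\,\frac{H_s^0|z-\tilde z|}{y}\,y\,\nu(dy)=D\,H_s^0\,|z-\tilde z|.
\]
Taking absolute values inside the expectation in (\ref{eq:def-j-z}) and using Fubini to pull $\E^{\Q}$ through the $ds$-integral (legitimate since the integrand differences are dominated by $DH_s^0|z-\tilde z|$, which has finite expectation) yields
\[
|Jf(z,t)-Jf(\tilde z,t)|\le|z-\tilde z|\left(e^{-\lambda\xi(T-t)}\,\E^{\Q}[H_{T-t}^0]+\lambda D\int_0^{T-t}e^{-\lambda\xi s}\,\E^{\Q}[H_s^0]\,ds\right).
\]

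The computational core is the evaluation of $\E^{\Q}[H_s^0]$. Since $H_s^0=\exp((\mu-\tfrac12\sigma^2)s-\sigma W_s)$ with $W$ a $\Q$-Brownian motion, the exponential-martingale computation gives $\E^{\Q}[H_s^0]=e^{\mu s}=e^{\lambda(\xi-1)s}$. Substituting, the first term simplifies to $e^{-\lambda(T-t)}$ and the integral to $\lambda D\int_0^{T-t}e^{-\lambda s}\,ds=D\bigl(1-e^{-\lambda(T-t)}\bigr)$, whence
\[
|Jf(z,t)-Jf(\tilde z,t)|\le\Bigl(e^{-\lambda(T-t)}\cdot1+\bigl(1-e^{-\lambda(T-t)}\bigr)\cdot D\Bigr)|z-\tilde z|.
\]
The coefficient is a convex combination of $1$ and $D$, the weights $e^{-\lambda(T-t)}$ and $1-e^{-\lambda(T-t)}$ lying in $[0,1]$, so it is bounded by $\max\{1,D\}=E$, which completes the proof.

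There is no serious obstacle. The two points meriting a little care are: (i) observing that $\nu$ in (\ref{eq:op-P}) is a probability measure so the $y$-weights in $Pf$ cancel the $1/y$ in the argument of $f$, leaving a clean factor $H_s^0|z-\tilde z|$; and (ii) recognizing that the final constant is a convex combination of $1$ and $D$ uniformly in $t$ — this is precisely why $J$ does not inflate the Lipschitz constant and why one may take $E=\max\{1,D\}$. Everything else is the bookkeeping of substituting $\E^{\Q}[H_s^0]=e^{\mu s}$ with $\mu=\lambda(\xi-1)$.
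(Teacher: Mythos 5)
Your proof is correct and follows essentially the same route as the paper's: both use the explicit representation (\ref{eq:def-j-z}), bound the terminal payoff difference by $H_{T-t}^0|z-\tilde z|$, use the Lipschitz hypothesis together with the $y$-weight/$1/y$-cancellation and $\int_{\R_+}\nu(dy)=1$ to bound the $Pf$ difference by $D\,H_s^0|z-\tilde z|$, and then apply $\E^{\Q}[H_s^0]=e^{\mu s}$ with $\mu-\lambda\xi=-\lambda$ to obtain $e^{-\lambda(T-t)}+D(1-e^{-\lambda(T-t)})\le\max\{1,D\}$. Your phrasing of the final step as ``a convex combination of $1$ and $D$'' is simply a cleaner way of reading off the same bound the paper arrives at by rearranging to $D+(1-D)e^{-\lambda(T-t)}$.
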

\begin{proof}
>From the definition of operator $J$ in (\ref{eq:def-j-z}), we have
\begin{eqnarray}
|Jf(z,t)- Jf(\tilde{z},t)|  &\leq&
\E^{\Q}\bigg\{e^{-\lambda\xi(T-t)}\left|\left(\zeta\cdot\left(zH_{T-t}^0+b_{T-t}-K_1\right)\right)^+
-\left(\zeta\cdot\left(\tilde{z}H_{T-t}^0 + b_{T-t}-K_1\right)\right)^+\right|  \nonumber\\
& &  + \int_0^{T-t} ds \,e^{-\lambda\xi s}\lambda \cdot
\left|Pf(zH_{s}^0 + b_{s}, t+s)
-Pf(\tilde{z}H_{s}^0+b_{s},t+s)\right| \bigg\}. \label{eq:x-tx}
\end{eqnarray}
Let us obtain a bound on the right hand side of (\ref{eq:x-tx}). First observe that
\begin{equation}
\left|(\zeta\cdot(zH_{T-t}^0+b_{T-t}-K_1))^+ -
(\zeta\cdot(\tilde{z}H_{T-t}^0+b_{T-t}-K_1))^+\right| \leq
|z-\tilde{z}|H_{T-t}^0 ,\label{est1}
\end{equation}
and
\begin{eqnarray}
\big|Pf(zH_{s}^0+ b_{s},t+s) &-& Pf(\tilde{z}H_{s}^0 +
b_{s},t+s)\big|
\nonumber\\
&\leq& \int_{\R_+}\left|f\left(\frac{zH_{s}^0}{y} +
\frac{b_{s}}{y} + q_{t+s}
\frac{y-1}{y},t+s\right)-f\left(\frac{\tilde{z}H_{s}^0}{y} +
\frac{b_{s}}{y} + q_{t+s} \frac{y-1}{y},t+s\right)\right| y \nu(dy)\nonumber\\
&\leq& \int_{\R_+} D\, |zH_{s}^0 - \tilde{z}H_{s}^0| \,
\nu(dy)
\nonumber\\
&=& D\, |z-\tilde{z}| \, H_{s}^0 \label{est2}.
\end{eqnarray}
On the other hand, from the definition of $H_s^0$ in (\ref{eq:def-h}), we have that
\begin{equation}\label{eq:eh}
\E^{\Q}\{H_{s}^0\} = e^{\mu s}.
\end{equation}
Inserting (\ref{est1}), (\ref{est2}) and (\ref{eq:eh}) back into
the equation (\ref{eq:x-tx}), we have
\begin{eqnarray}
\left|Jf(z,t)- Jf(\tilde{z},t)\right| &\leq&
e^{-\lambda\xi(T-t)}\, |z-\tilde{z}| \, \E^{\Q}\{H_{T-t}^0\} +
\int_0^{T-t} ds\, e^{-\lambda\xi s}\, \lambda \, D \, |z-\tilde{z}|
\,
\E^{\Q}\{H_{s}^0\} \nonumber\\
&=& |z-\tilde{z}|\, \left(e^{(\mu-\lambda\xi)(T-t)} + D\int_0^{T-t} ds
\, \lambda \, e^{(\mu-\lambda\xi)s}\right) \nonumber\\
&\leq& \left(D+(1-D)e^{-\lambda (T-t)}\right)\,|z-\tilde{z}|\\
&\leq& \max\{1,D\}\, |z-\tilde{z}|.
\end{eqnarray}
\end{proof}
\begin{remark}\label{remark-2.2}
Let us define
\begin{eqnarray}
 M_f &\triangleq& \sup_{t\in[0,T]} f(0,t),\label{eq:def-Mf}\\
 M_{Jf} &\triangleq& \sup_{t\in[0,T]} Jf(0,t). \label{eq:def-MJf}
\end{eqnarray}
It follows from the Lipschitz conditions (\ref{eq:lip-f}) and
(\ref{eq:lip-jf}) that both $f$ and $Jf$ satisfy
linear growth conditions, if $M_f$ and $M_{Jf}$ are finite, since for $(z,t)\in \R\times[0,T]$
\begin{eqnarray}
f(z,t) \leq f(0,t) + D\,|z|,\label{eq:linear-f}\\
Jf(z,t) \leq Jf(0,t) + E\,|z|.\label{eq:linear-jf}
\end{eqnarray}
\end{remark}

In the next two lemmas we will need the following moment estimates  of $Z_s^{t,z}$.
\begin{eqnarray}
 \E^{\Q} \left\{|Z_s^{t,z}|\right\} &\leq& C(1+|z|), \label{eq:eZ}\\
 \E^{\Q}\left\{|Z_s^{t,z} - z| \right\} &\leq&
 C(1+|z|)s^{\frac12}, \label{eq:eZ-z}
\end{eqnarray}
in which $0\leq s \leq T$ and $C$ is a constant depending on
$T$. These estimates can be  found  in
\cite{pham1} (Lemma 3.1).

\begin{lemma}\label{lemma:mjf}
We have that
\begin{equation}\label{est:Mjf}
M_{Jf} \leq U + \alpha\left(M_f + \frac{B}{\xi}\right),
\end{equation}
in which $\alpha=1-e^{-\lambda\xi T} <1$, and $U$, $B$ are
positive constants depending on $T$.
\end{lemma}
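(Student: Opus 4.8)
The plan is to evaluate the explicit representation (\ref{eq:def-j-z}) at $z=0$ and bound the terminal term and the integral term separately, extracting from the latter the exact constant $\alpha=1-e^{-\lambda\xi T}$. Since $M_{Jf}=\sup_{t\in[0,T]}Jf(0,t)$, it suffices to bound $Jf(0,t)$ uniformly in $t$. First I would record that $Z^{t,0}_s=b_s$ by (\ref{sol z-t}), so the moment estimate (\ref{eq:eZ}) applied with $z=0$ gives $\E^{\Q}\{|b_s|\}\le C$ for all $0\le s\le T$ and all $t\in[0,T]$, with $C$ depending only on $T$. For the terminal term, $\left(\zeta\cdot(b_{T-t}-K_1)\right)^+\le |b_{T-t}|+|K_1|$, so its expectation is at most $e^{-\lambda\xi(T-t)}(C+|K_1|)\le C+|K_1|=:U$.

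For the integral term I would invoke the linear growth bound $f(z,t)\le M_f+D|z|$ from Remark~\ref{remark-2.2} (valid precisely when $M_f<\infty$; if $M_f=\infty$ there is nothing to prove). Substituting $z=\frac{b_s}{y}+q_{t+s}\frac{y-1}{y}$ into (\ref{eq:def-pf}) and integrating against $y\,\nu(dy)$, the $M_f$ part contributes $M_f\int_{\R_+}y\,\nu(dy)=M_f\xi$, while the growth part contributes $D\int_{\R_+}\bigl|b_s+q_{t+s}(y-1)\bigr|\nu(dy)\le D|b_s|+Dq_{t+s}\int_{\R_+}|y-1|\,\nu(dy)$. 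Using $q_{t+s}\le \tfrac1{rT}$ and $\int_{\R_+}|y-1|\,\nu(dy)\le \xi+1$, and then taking $\Q$-expectations together with $\E^{\Q}\{|b_s|\}\le C$, I obtain $\E^{\Q}\{Pf(b_s,t+s)\}\le M_f\xi+B$, where $B:=DC+\tfrac{D(\xi+1)}{rT}$ depends only on $T$.

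Combining, $Jf(0,t)\le U+\lambda\int_0^{T-t}e^{-\lambda\xi s}(M_f\xi+B)\,ds$. The key identity $\lambda\int_0^{T-t}e^{-\lambda\xi s}\,ds=\tfrac1\xi\bigl(1-e^{-\lambda\xi(T-t)}\bigr)$ turns this into $Jf(0,t)\le U+\bigl(1-e^{-\lambda\xi(T-t)}\bigr)\bigl(M_f+\tfrac B\xi\bigr)$, and since $1-e^{-\lambda\xi(T-t)}$ is increasing in $T-t$ it is bounded above by $\alpha=1-e^{-\lambda\xi T}$. Taking the supremum over $t\in[0,T]$ yields $M_{Jf}\le U+\alpha\bigl(M_f+\tfrac B\xi\bigr)$, as claimed.

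The computation is essentially routine; the only thing that requires care is the bookkeeping of constants so that the factor $1/\xi$ is attached to $B$ alone and the factor $\alpha$ is produced cleanly. Both are forced by the single identity $\lambda\int_0^{T-t}e^{-\lambda\xi s}\,ds=\tfrac1\xi(1-e^{-\lambda\xi(T-t)})$: the $M_f\xi$ term meets the $1/\xi$ to leave $M_f$ untouched, while $B$ keeps the $1/\xi$, and bounding $T-t\le T$ inside $1-e^{-\lambda\xi(T-t)}$ produces the contraction constant $\alpha<1$ that will later be used to show the sequence $(v_n)$ is Cauchy. I also use that $\nu$ is a probability measure (so $\int_{\R_+}\nu(dy)=1$), which holds since $\nu$ is the law of the jump multiplier $Y$; otherwise the term $D|b_s|$ should read $D|b_s|\,\nu(\R_+)$, which can be absorbed into $C$ under the standing integrability hypotheses.
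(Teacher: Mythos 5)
Your proof is correct and takes essentially the same route as the paper: evaluate (\ref{eq:def-j-z}) at $z=0$, bound the terminal term by $U=C+K_1$ via (\ref{eq:eZ}) with $z=0$, push the linear growth bound $f(z,t)\le M_f+D|z|$ through $Pf$, and use the identity $\lambda\int_0^{T-t}e^{-\lambda\xi s}\,ds=\xi^{-1}\bigl(1-e^{-\lambda\xi(T-t)}\bigr)$ to produce $\alpha\bigl(M_f+B/\xi\bigr)$ with the same $B=D\bigl(C+\tfrac{\xi+1}{rT}\bigr)$. The only differences are cosmetic: you make explicit that $\nu$ is a probability measure (which the paper uses silently when simplifying $\int\frac{|Z^{t,0}_s|}{y}\,y\,\nu(dy)=|Z^{t,0}_s|$), and you substitute $M_f$ for $f(0,t+s)$ before integrating rather than after.
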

\begin{proof}
We will estimate $M_{Jf}$ using the definition of the operator $J$ in
(\ref{eq:def-J-exp}). First, we have that
\begin{equation*}
\begin{split}
\E^{\Q}
\left\{e^{-\lambda\xi(T-t)}\left(\zeta\cdot\left(Z_{T-t}^{t,0} -
K_1\right)\right)^+\right\}  &\leq \E^{\Q}
\left\{e^{-\lambda\xi(T-t)}\left(\left|Z_{T-t}^{t,0}\right|+K_1\right)\right\}\\
&= \E^{\Q}
\left\{e^{-\lambda\xi(T-t)}\left(\left|b_{T-t}\right|+K_1\right)\right\},
\end{split}
\end{equation*}
in which we obtain the last inequality using the expression of
$Z^{t,z}$ in (\ref{sol z-t}) with $z=0$. First, it follows from
(\ref{eq:eZ}) with $z=0$ that
\begin{equation}\label{eq:est-b}
\E^{\Q}\{|b_{T-t}|\}= \E^{\Q}\{|Z_{T-t}^{t,0}|\}\leq C.
\end{equation}
Letting $U \triangleq C+K_1$,
which is a finite positive constant depending on $T$, we have that
\begin{equation}\label{est-Jf1}
\E^{\Q}\left\{e^{-\lambda\xi(T-t)}\left(\left|b_{T-t}\right|+K_1\right)\right\}
\leq U.
\end{equation}

Second, we will estimate the second term in the definition of
$J$ in (\ref{eq:def-J-exp}). From the definition of $Pf$ in
(\ref{eq:def-pf}), we have
\begin{eqnarray}
\E^{\Q}\left\{Pf(Z_s^{t,0},t+s)\right\} &=&
\E^{\Q} \left\{\int_{\R_+}f\left(\frac{Z_s^{t,0}}{y}+
q_{t+s}\frac{y-1}{y}, t+s\right)\, y\nu(dy)\right\}\nonumber\\
&\leq& \E^{\Q} \left\{\int_{\R_+}\left(f(0,t+s)+ D
\frac{\left|Z_s^{t,0}\right|}{y} + D \,q_{t+s}
\frac{|y-1|}{y}\right)\, y \nu(dy)\right\} \nonumber\\
&\leq& \xi f(0,t+s) + D \, q_{t+s} (\xi+1) + D\, \E^{\Q}\left\{\left|Z_s^{t,0}\right|\right\} \nonumber\\
&\leq& \xi f(0,t+s) + D\, q_{t+s}(\xi+1) + C\cdot D. \label{eq:expPf}
\end{eqnarray}
To obtain the first inequality we use the inequality
(\ref{eq:linear-f}), whereas the second inequality follows from $|y-1|\leq
y+1$. To obtain the last inequality, we use the inequality (\ref{eq:eZ})
with $z=0$. Now, using (\ref{eq:expPf}) we obtain
\begin{equation}\label{eq:lhs-int-pf}
\E^{\Q} \left\{\int_0^{T-t} e^{-\lambda\xi s}\lambda\cdot
Pf\left(Z_s^{t,0},t+s\right)\, ds\right\} \leq
\int_0^{T-t}e^{-\lambda\xi s}\lambda\cdot \left[\xi f(0,t+s) + D\,
q_{t+s} (\xi+1) + C\cdot D\right] \, ds.
\end{equation}
Since $0\leq s\leq T-t$, we have $q_{t+s} \leq \frac{1}{rT}$. Let us define
\begin{equation}\label{eq:def-B}
B\triangleq \left[\frac{1}{rT}(\xi+1)+ C\right]\cdot D,
\end{equation}
which is a finite positive constant depending on $T$. Now, we have the
following estimation on the left hand side of (\ref{eq:lhs-int-pf})
\begin{eqnarray}
 \E^{\Q} \left\{\int_0^{T-t}e^{-\lambda\xi s }\lambda\cdot Pf(Z_s^{t,0}, t+s)\,
 ds\right\}
 &\leq& \int_0^{T-t}e^{-\lambda\xi s}\lambda\cdot
 \left[\xi f(0,t+s) + B\right]\, ds \nonumber\\
 &\leq& \left(1-e^{-\lambda\xi(T-t)}\right)\left(M_f +
 \frac{B}{\xi}\right) \nonumber\\
 &\leq& \left(1-e^{-\lambda\xi T}\right)\left(M_f+
 \frac{B}{\xi}\right), \quad \text{for } t\in [0,T].\label{est-Jf2}
\end{eqnarray}
>From inequalities (\ref{est-Jf1}) and (\ref{est-Jf2}), we
conclude that
\begin{equation}\label{est-J0t}
Jf(0,t)\leq U+ \left(1-e^{-\lambda\xi T}\right)\left(M_f+
\frac{B}{\xi}\right).
\end{equation}
\end{proof}

\begin{remark}\label{remark-tildeDE}
Lemma \ref{lemma:mjf} and Remark \ref{remark-2.2} indicate that
\begin{eqnarray}
 f(z,t) &\leq& M_f + D\,|z| \leq \tilde{D}(1+|z|), \label{eq:tildeD} \\
 Jf(z,t) &\leq&  U+ \alpha\left(M_f+\frac{B}{\xi}\right) + E\, |z|
 \leq \tilde{E}(1+|z|), \label{eq:tildeE}
\end{eqnarray}
in which $\tilde{D}=\max\{M_f, D\}$ and
$\tilde{E}=\max\{U+\alpha(M_f+B/\xi), E\}$. We will use these
linear growth properties to show a regularity property of the
operator $J$ with respect to time in the next lemma.
\end{remark}

\begin{lemma}\label{lemma:holder}
Assume the function $z \mapsto f(z,t)$ satisfies
\begin{equation}\label{eq:lip-f2}
 |f(z,t)-f(\tilde{z},t)| \leq D\, |z-\tilde{z}|,
\end{equation}
for $z,\tilde{z}\in \R$ as in Lemma \ref{lemma:lip} and $M_f<\infty$. Then $t
\mapsto Jf(z,t)$  satisfies
\begin{equation}\label{eq:holder-jf}
 |Jf(z,t)-Jf(z,s)| \leq F\, (1+|z|)\,(s-t)^{\frac12}, \quad 0\leq t < s
 \leq T,
\end{equation}
in which $F$ is a positive constant that only depends on $\lambda$, $\xi$, $T$
and $M_f$.
\end{lemma}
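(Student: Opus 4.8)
The plan is to start from the explicit representation (\ref{eq:def-J-exp}) of $Jf$ together with the moment estimates (\ref{eq:eZ})--(\ref{eq:eZ-z}). Fix $0\le t<s\le T$. The first step is a change of variable in the running integral of (\ref{eq:def-J-exp}): substituting $w=t+u$ rewrites that term at time $t$ as $\int_t^T e^{-\lambda\xi(w-t)}\lambda\,Pf\!\left(Z_{w-t}^{t,z},w\right)dw$, and analogously at time $s$ as $\int_s^T e^{-\lambda\xi(w-s)}\lambda\,Pf\!\left(Z_{w-s}^{s,z},w\right)dw$. This alignment is essential: after the substitution $f$ is evaluated at the \emph{same} running time $w$ in both expressions, which is how $Jf$ acquires time regularity even though $f$ is only assumed Lipschitz in $z$ with no regularity in its time variable. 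I then split
\[
Jf(z,t)-Jf(z,s)=\mathrm{(I)}+\mathrm{(II)}+\mathrm{(III)},
\]
where $\mathrm{(I)}$ is the difference of the terminal terms $\E^{\Q}\{e^{-\lambda\xi(T-t)}(\zeta\cdot(Z_{T-t}^{t,z}-K_1))^+\}-\E^{\Q}\{e^{-\lambda\xi(T-s)}(\zeta\cdot(Z_{T-s}^{s,z}-K_1))^+\}$, $\mathrm{(II)}$ is the ``extra horizon'' piece $\E^{\Q}\{\int_t^s e^{-\lambda\xi(w-t)}\lambda\,Pf(Z_{w-t}^{t,z},w)\,dw\}$, and $\mathrm{(III)}$ is the integral over $w\in[s,T]$ of the difference of the two aligned integrands.

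The term $\mathrm{(II)}$ and all discount-factor mismatches are routine. For $\mathrm{(II)}$, repeating the computation leading to (\ref{eq:expPf}) with the process started from $z$ instead of $0$ — using $f(\cdot,w)\le M_f+D|\cdot|$, the estimate (\ref{eq:eZ}), and $q_w\le 1/(rT)$ — gives $\E^{\Q}\{Pf(Z_{w-t}^{t,z},w)\}\le C(1+|z|)$ with $C$ depending on the model parameters, $T$ and $M_f$; since $[t,s]$ has length $s-t\le T^{1/2}(s-t)^{1/2}$, this yields $|\mathrm{(II)}|\le F_2(1+|z|)(s-t)^{1/2}$. In $\mathrm{(I)}$ I peel off the discount mismatch via $|e^{-\lambda\xi(T-t)}-e^{-\lambda\xi(T-s)}|\le\lambda\xi(s-t)$ against $\E^{\Q}\{(\zeta\cdot(Z_{T-s}^{s,z}-K_1))^+\}\le C(1+|z|)$ (using $(\zeta x)^+\le|x|$ and (\ref{eq:eZ})), and in $\mathrm{(III)}$ a factor $|e^{-\lambda\xi(w-t)}-e^{-\lambda\xi(w-s)}|\le\lambda\xi(s-t)$ against $\E^{\Q}\{Pf(Z_{w-t}^{t,z},w)\}\le C(1+|z|)$; both contributions are $O((1+|z|)(s-t))$. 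What remains of $\mathrm{(I)}$ is $\E^{\Q}\{(\zeta\cdot(Z_{T-t}^{t,z}-K_1))^+\}-\E^{\Q}\{(\zeta\cdot(Z_{T-s}^{s,z}-K_1))^+\}$, and what remains of $\mathrm{(III)}$ is the integral over $w\in[s,T]$ of $\E^{\Q}\{Pf(Z_{w-t}^{t,z},w)\}-\E^{\Q}\{Pf(Z_{w-s}^{s,z},w)\}$.

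The crux is therefore to bound, for $w\in[s,T]$, the quantity $\E^{\Q}\{G(Z_{w-t}^{t,z})\}-\E^{\Q}\{G(Z_{w-s}^{s,z})\}$ where $G$ is either $x\mapsto(\zeta\cdot(x-K_1))^+$ (take $w=T$, for $\mathrm{(I)}$) or $x\mapsto Pf(x,w)$ (for $\mathrm{(III)}$). The right move is \emph{not} to couple the two processes pathwise (they are built from different pieces of the Brownian path), but to use the Markov property of the time-inhomogeneous diffusion at absolute time $s$:
\[
\E^{\Q}\{G(Z_{w-t}^{t,z})\}=\E^{\Q}\{\psi_G(Z_{s-t}^{t,z})\},\qquad \psi_G(x):=\E^{\Q}\{G(Z_{w-s}^{s,x})\}.
\]
Since $Z_{w-s}^{s,x}=xH_{w-s}^0+b_{w-s}$ is affine in $x$ by (\ref{sol z-t}) and $G$ is $L_G$-Lipschitz in its first argument ($L_G=1$ for the payoff, exactly as in (\ref{est1}); $L_G=D$ for $Pf$, exactly as in (\ref{est2})), the function $\psi_G$ is Lipschitz with constant $L_G\,\E^{\Q}\{H_{w-s}^0\}=L_G e^{\mu(w-s)}\le L_G e^{|\mu|T}$ by (\ref{eq:eh}). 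Hence $|\E^{\Q}\{G(Z_{w-t}^{t,z})\}-\E^{\Q}\{G(Z_{w-s}^{s,z})\}|=|\E^{\Q}\{\psi_G(Z_{s-t}^{t,z})\}-\psi_G(z)|\le L_G e^{|\mu|T}\E^{\Q}\{|Z_{s-t}^{t,z}-z|\}$, which by the key estimate (\ref{eq:eZ-z}) (with time increment $s-t$) is $\le L_G e^{|\mu|T}C(1+|z|)(s-t)^{1/2}$. Applying this with $w=T$ controls the remaining part of $\mathrm{(I)}$, and integrating it over $w\in[s,T]$ (interval length $\le T$) controls the remaining part of $\mathrm{(III)}$. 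Summing the contributions of $\mathrm{(I)}$, $\mathrm{(II)}$, $\mathrm{(III)}$ and letting $F$ be the total constant proves (\ref{eq:holder-jf}); each constant produced depends on $f$ only through $M_f$ and through the fixed Lipschitz constant $D$ (which itself depends only on $T$), so $F$ has the stated dependence.

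The one step that takes genuine thought, rather than mechanical estimation, is this last reduction: recognizing that the \emph{time}-increment in $Jf$ should be converted, via the Markov property and the affine dependence (\ref{sol z-t}) of $Z^{s,\cdot}$ on its initial datum, into the \emph{space}-increment $|Z_{s-t}^{t,z}-z|$, and that this increment is exactly what (\ref{eq:eZ-z}) bounds at the order $(s-t)^{1/2}$ appearing in the statement. Everything else — the change of variable aligning the $f$-arguments, Lipschitzness of $x\mapsto x^+$, the linear-growth bound on $\E^{\Q}\{Pf(Z,\cdot)\}$, and $q_w\le 1/(rT)$ — is bookkeeping, and one should just verify at the close that the accumulated constant depends on $f$ only through $M_f$.
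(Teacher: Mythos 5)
Your argument is correct, and it reaches the same essential conclusion by a genuinely different route than the paper's. The paper's proof uses the dynamic-programming identity (\ref{eq:dyn}) to express $Jf(z,t)$ through $Jf(Z_{h-t}^{t,z},h)$ at the intermediate time $h=s$; the time increment $s-t$ is thereby converted to the spatial increment $Z_{s-t}^{t,z}-z$ in a single stroke, and the Lipschitz constant of $Jf$ already established in Lemma~\ref{lemma:lip}, combined with (\ref{eq:eZ-z}), immediately gives the $(s-t)^{1/2}$ term; the remaining two terms in (\ref{eq:Jft-s}) are the short extra horizon and the discount mismatch, which are of order $s-t$. You instead decompose $Jf(z,t)-Jf(z,s)$ directly from the representation (\ref{eq:def-J-exp}) into terminal, extra-horizon, and running-integral pieces, align the time argument of $f$ via a change of variables, and apply the Markov property of $Z$ at absolute time $s$ together with the affine solution formula (\ref{sol z-t}) to show that the transition expectation $\psi_G$ is Lipschitz with constant $L_G e^{\mu(w-s)}$ --- effectively re-deriving the relevant Lipschitz bound on the continuation value rather than citing Lemma~\ref{lemma:lip}. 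Both approaches hinge on the identical key reduction (time increment $\to$ spatial increment $\to$ estimate (\ref{eq:eZ-z})), but the paper's route is shorter because the dynamic-programming identity and Lemma~\ref{lemma:lip} package everything up front, whereas yours is more self-contained and makes the mechanism producing $(s-t)^{1/2}$ more visible; the bookkeeping of the pieces you need to peel off ((I), (II), (III), and the two discount-factor mismatches) is correspondingly heavier. One small thing to watch: you claim $F$ depends on $f$ only through $M_f$, but it also depends on $D$ --- this matches the paper, since the hypothesis of Lemma~\ref{lemma:lip} stipulates $D$ depends only on $T$, a convention you correctly note in your closing sentence.
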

\begin{proof}
For any $h\in [t, T]$, it follows from the definition of operator $J$ in
(\ref{eq:def-J-exp}) and the Markov property of $Z_s^{t,z}$ that
\begin{equation}
Jf(z,t) = \E^{\Q} \left\{\int_0^{h-t} dv\,
e^{-\lambda\xi v}\,\lambda \cdot Pf(Z_v^{t,z},t+v) +
e^{-\lambda\xi(h-t)}Jf(Z_{h-t}^{t,z},h) \right\}. \label{eq:dyn}
\end{equation}
With $h=s$,
\begin{eqnarray}
&&\left|Jf(z,t)-Jf(z,s)\right| \leq \E^{\Q}\left\{\int_0^{s-t}
e^{-\lambda\xi v}\lambda \cdot Pf(Z_v^{t,z},t+v)\,dv +
|e^{-\lambda\xi(s-t)}Jf(Z_{s-t}^{t,z},s)-Jf(z,s)|\right\}\nonumber\\
&& \hspace{0.5cm} \leq \E^{\Q} \left\{\int_0^{s-t}
e^{-\lambda\xi v}\lambda\cdot Pf(Z_v^{t,z},t+v)\,dv +
e^{-\lambda\xi(s-t)}\,\left|Jf(Z_{s-t}^{t,z},s)-Jf(z,s)\right|+
\left|e^{-\lambda\xi(s-t)}-1\right|Jf(z,s)\right\}.\nonumber\\
\label{eq:Jft-s}
\end{eqnarray}
In what follows we will bound the terms on the right-hand-side of this inequality.
Since the condition (\ref{eq:lip-f2}) holds, Lemma \ref{lemma:lip}
applies. As a result it follows from (\ref{eq:lip-jf}) that
\begin{equation}\label{est-holder1}
{\E}^{\Q} \left\{|Jf(Z_{s-t}^{t,z},s)-Jf(z,s)|\right\} \leq E\,
{\E}^{\Q} \left\{|Z_{s-t}^{t,z}-z|\right\},
\end{equation}
Using the estimate in (\ref{eq:tildeD}) we have that
\begin{eqnarray}
{\E}^{\Q} \left\{Pf(Z_v^{t,z},t+v)\right\} &=& \int_{\R_+} y\nu(dy)\, \E^{\Q}  \left\{f\left(\frac{Z_v^{t,z}}{y} + q_{t+v}\frac{y-1}{y}, t+v\right)\right\}\nonumber\\
&\leq& \int_{\R_+}y\nu(dy)\,\tilde{D} \left(1+
\frac{1}{y} {\E}^{\Q} \left\{|Z_v^{t,z}|\right\} + q_{t+v} \frac{|y-1|}{y}\right) \nonumber \\
&\leq& \tilde{D}\left(\xi + {\E}^{\Q}\left\{|Z_v^{t,z}|\right\}
+
(\xi+1) q_{t+v}\right)\nonumber\\
&\leq& \tilde{D}\left(\xi + \frac{1}{rT}(\xi+1) +
C(1+|z|)\right).\label{est-holder2}
\end{eqnarray}
To obtain the last inequality we use the estimation (\ref{eq:eZ}) and
the fact that $q_{t+v} \leq \frac{1}{rT}$ for $v\in [0,s-t]$. On the other hand, from
(\ref{eq:tildeE}), we have that
\begin{equation}
|Jf(z,s)| \leq \tilde{E}(1+|z|). \label{est-holder3}
\end{equation}
In the inequalities above, the constants $E, \tilde{D} \text{ and }
\tilde{E}$ are as in Lemma \ref{lemma:lip} and Remark
\ref{remark-tildeDE}.

Now, using (\ref{est-holder1}), (\ref{est-holder2}),
(\ref{est-holder3})
and the inequalities
\begin{equation}\label{est-holder4}
e^{-\lambda\xi v} <1, \quad \text{and} \quad
1-e^{-\lambda\xi(s-t)} \leq \lambda\xi(s-t),
\end{equation}
we can bound
(\ref{eq:Jft-s}) as follows:
\begin{eqnarray}
|Jf(z,t)&-&Jf(z,s)|\nonumber\\
&\leq& \tilde{D}\,\lambda\, \left(\xi + \frac{1}{rT}(\xi+1) +
C(1+|z|)\right)\,(s-t) +
E\,{\E}^{\Q} \left\{|Z_{s-t}^{t,z}-z|\right\} + \lambda\,\xi\,\tilde{E}\,\left(1+|z|\right)\,(s-t) \nonumber\\
&\leq& \tilde{D}\, \lambda\, \left(\xi + \frac{1}{rT}(\xi+1) +
C(1+ |z|)\right)\,(s-t) + E\cdot C \left(1+|z|\right)\,
(s-t)^{\frac12} + \lambda\, \xi\, \tilde{E}\,
\left(1+|z|\right)\, (s-t) \nonumber\\
&\leq& F \left(1+|z|\right)\,(s-t)^{\frac12}, \label{eq:Jft-s-res}
\end{eqnarray}
where $F$ is a positive constant only depending on $\lambda$, $\xi$, $T$
and $M_f$. To obtain the second inequality, we use the moment estimates
(\ref{eq:eZ-z}). To obtain the last inequality, we use the fact that $s-t
\leq T$.
\end{proof}

In the following
proposition we show that $Jf$ satisfies a parabolic partial differential
equation.

\begin{proposition}\label{theorem-1}
 Assume function $f: \R\times [0,T] \rightarrow \R_+$ satisfies
 the following condition
 \begin{equation}\label{theorem-1-condition}
 \left|f(z,t)-f(\tilde{z},s)\right| \leq D|z-\tilde{z}| +
 F(1+|z|)\, |s-t|^{\frac12}, \quad (z,t), (\tilde{z},s) \in
 \R\times[0,T],
 \end{equation}
 in which $D$ and $F$ are constants, then the function $Jf:\R\times [0,T]\rightarrow
 \R_+$ is the unique classical solution, i.e. $Jf \in C^{2,1}$, of
 \begin{eqnarray}
 & &\A(t)Jf(z,t) -\lambda\xi Jf(z,t) + \lambda \cdot Pf(z,t) +
 \frac{\partial}{\partial t}Jf(z,t) =0 \label{eqn-Jf-1}\\
 & &Jf(z,T)=(\zeta\cdot (z-K_1))^+.\label{eqn-Jf-2}
 \end{eqnarray}
 \end{proposition}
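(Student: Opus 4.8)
The plan is to read off from \eqref{eq:def-J-exp} that $Jf$ is a Feynman--Kac functional: $Jf(z,t)$ is the $\Q$-expectation of the terminal reward $\left(\zeta\cdot(Z_{T-t}^{t,z}-K_1)\right)^+$ plus the running reward $\lambda\,Pf(Z_s^{t,z},t+s)$, discounted at the constant rate $\lambda\xi$, along the diffusion $Z^{t,z}$ of \eqref{eq:z-t}, whose infinitesimal generator is exactly the operator $\A(t)$ of \eqref{eq:def-At}. Hence one expects $Jf$ to solve \eqref{eqn-Jf-1}--\eqref{eqn-Jf-2}; the task is to make this rigorous in spite of the fact that the coefficients of $\A(t)$ (the drift $-\mu(q_t-z)$ and the diffusion coefficient $\tfrac12\sigma^2(q_t-z)^2$) grow unboundedly in $z$ and that the latter degenerates along the curve $\{z=q_t\}$.

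First I would record the regularity of the data of \eqref{eqn-Jf-1}. Using the representation \eqref{eq:def-pf} of $Pf$, the hypothesis \eqref{theorem-1-condition}, the $C^1$ (hence Lipschitz) dependence of $q_t$ on $t$, and the finiteness of $\xi=\int_{\R_+}y\,\nu(dy)$, an estimate of exactly the kind carried out in the proofs of Lemmas~\ref{lemma:lip} and \ref{lemma:holder} shows that $Pf$ is Lipschitz in $z$ uniformly in $t$, H\"{o}lder-$\tfrac12$ continuous in $t$, and of linear growth in $z$; the terminal datum $\left(\zeta\cdot(z-K_1)\right)^+$ is Lipschitz. Thus the forcing term $\lambda\,Pf$ and the terminal condition of \eqref{eqn-Jf-1}--\eqref{eqn-Jf-2} lie in the regularity class to which the classical theory of linear parabolic equations applies.

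For existence and $C^{2,1}$ regularity I would localize. Fix $R>0$, set $D_R=(-R,R)\times[0,T)$, and solve the parabolic Cauchy--Dirichlet problem for \eqref{eqn-Jf-1} on $D_R$ with terminal value $\left(\zeta\cdot(z-K_1)\right)^+$ and lateral value $Jf(\pm R,\cdot)$, which is continuous by Lemmas~\ref{lemma:lip} and \ref{lemma:holder}. On $\overline{D_R}$ the coefficients of $\A(t)$ are smooth in $z$ and $C^1$ in $t$, and the operator is uniformly parabolic off the curve $\{z=q_t\}$; the interior degeneracy on that curve is harmless because, by the explicit solution $Z_s^{t,z}=zH_s^0+b_s$ of \eqref{eq:z-t} (see \eqref{eq:def-h}, \eqref{eq:def-b}), $Z_s^{t,z}$ has for every $s>0$ a smooth scaled-and-shifted lognormal density even when $z=q_t$ --- equivalently, H\"{o}rmander's condition holds there since $g=\tfrac{d}{dt}q_t$ is nowhere zero --- so solutions are smooth across $\{z=q_t\}$. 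Classical parabolic theory then yields a unique $w_R\in C^{2,1}(D_R)\cap C(\overline{D_R})$ solving the localized problem. Applying It\^{o}'s formula to $s\mapsto e^{-\lambda\xi s}w_R(Z_s^{t,z},t+s)$ on $[0,(T-t)\wedge\tau_R]$, with $\tau_R$ the exit time of $Z^{t,z}$ from $(-R,R)$, and using optional sampling together with the semigroup identity \eqref{eq:dyn} (with the stopping time $t+\tau_R$ in place of $h$) to match the absorbed value $Jf(\pm R,\cdot)$, one obtains $w_R=Jf$ on $D_R$. Since $w_R\in C^{2,1}(D_R)$, this already shows $Jf\in C^{2,1}(\R\times[0,T))$ and that $Jf$ satisfies \eqref{eqn-Jf-1} there; continuity up to $t=T$ with the terminal value \eqref{eqn-Jf-2} follows from dominated convergence in \eqref{eq:def-J-exp}, using path continuity, $Z_{T-t}^{t,z}\to z$ as $t\uparrow T$, and the linear growth of $Jf$ from Lemma~\ref{lemma:mjf} and Remark~\ref{remark-tildeDE}.

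For uniqueness: any $C^{2,1}$ solution $u$ of \eqref{eqn-Jf-1}--\eqref{eqn-Jf-2} of at most polynomial (in particular, linear) growth in $z$ must equal $Jf$; applying It\^{o}'s formula to $e^{-\lambda\xi s}u(Z_s^{t,z},t+s)$ on $[0,(T-t)\wedge\tau_R]$ and letting $R\to\infty$, the growth bound on $u$ together with the moment estimates \eqref{eq:eZ}--\eqref{eq:eZ-z} makes the stopped family uniformly integrable, so $\Q(\tau_R<T-t)\to0$ and the boundary contributions vanish, recovering the representation \eqref{eq:def-J-exp}. I expect the main obstacle to be precisely this interplay between the unbounded, degenerate coefficients of $\A(t)$ and the unbounded (linearly growing) data and solutions: the off-the-shelf existence, regularity and uniqueness results are stated for uniformly parabolic operators with bounded coefficients and data, so the localization step and the passage $R\to\infty$ --- where the moment bounds \eqref{eq:eZ}--\eqref{eq:eZ-z} and the linear-growth estimates of Lemma~\ref{lemma:mjf} do the real work --- form the technical heart of the argument.
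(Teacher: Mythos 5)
Your proposal follows essentially the same blueprint as the paper's proof: establish Lipschitz/H\"{o}lder regularity of $Pf$ from the hypothesis (the paper does this in display (\ref{eq:pf-holder})), localize to a bounded cylinder, solve the Dirichlet problem there with lateral boundary data given by $Jf$ itself, identify the solution with $Jf$ by the Feynman--Kac representation and the strong Markov property, conclude globally by the arbitrariness of the cylinder, and handle uniqueness via a linear-growth bound. Where you genuinely diverge is in confronting the degeneracy of $\A(t)$ along $\{z=q_t\}$. The paper invokes Theorem 5.2, Chapter 6 of Friedman, which is stated for uniformly parabolic operators on bounded domains, without addressing the fact that the coefficient $\frac12\sigma^2(q_t-z)^2$ vanishes on that curve whenever the chosen rectangle crosses it (and some rectangles must, since $q_t$ ranges over $[0,q_0]$ with $q_0=\frac{1}{rT}(1-e^{-rT})$). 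You flag this and propose a hypoellipticity argument: H\"{o}rmander's condition holds because $g=\frac{d}{dt}q_t$ never vanishes, so the drift supplies the missing direction and solutions are automatically smooth across $\{z=q_t\}$. That is the right observation, and it addresses a step the paper passes over in silence. As written, however, it only gives interior regularity of a solution; you would still need to supply existence of a (weak or viscosity) solution on the cylinder to which H\"{o}rmander's theorem applies, since Friedman's existence theorem is exactly the ingredient unavailable at the degeneracy. One small inaccuracy: $Z_s^{t,q_t}=q_{t+s}-\int_0^{s}H_{s-v}^0\,g(t+v)\,dv$ has the law of (a deterministic shift of) an integral of geometric Brownian motion, not a ``scaled-and-shifted lognormal''; it is nonetheless smooth, so your conclusion stands. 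For uniqueness, your probabilistic verification with $R\to\infty$, controlled by the moment estimates (\ref{eq:eZ})--(\ref{eq:eZ-z}), is a sound alternative to the paper's citation of Corollary 4.4, Chapter 6 of Friedman (a maximum-principle result for polynomially growing solutions); both routes rest on the same linear-growth estimate coming from Lemma~\ref{lemma:mjf} and Remark~\ref{remark-tildeDE}.
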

\begin{proof}
It is clear that $Jf$ satisfies the terminal condition. For any
point $(z,t)\in\R\times[0,T]$, let us take a rectangle
$R=[z_1,z_2]\times[0,T]$, so that $(z,t)\in R$. Denote the
parabolic boundary of $R$ by $\partial_0 R := \partial R-
[z_1,z_2]\times\{0\}$. Consider the following parabolic partial
differential equation
\begin{eqnarray}
& &\A(t)u(z,t) -\lambda\xi u(z,t) + \lambda \cdot Pf(z,t) +
 \frac{\partial}{\partial t}u(z,t) =0 \label{eqn-Jf-1-pf}\\
& &u(z,t)= Jf(z,t) , \quad \text{on } \partial_0 R.
\label{eqn-Jf-2-pf}
\end{eqnarray}
Because of the condition (\ref{theorem-1-condition}),
$z \rightarrow f(z,t)$ is Lipschitz in its first variable uniformly in
the second variable, it follows from Lemmas \ref{lemma:lip} and
\ref{lemma:holder} that $z \rightarrow Jf(z,t)$ is Lipschitz and $t \rightarrow Jf(z,t)$ is
H\"{o}lder continuous. As a result $Jf(\cdot,\cdot)$ is a
continuous function on $\R \times \R_+$.

On the other hand, for $(z,t), (\tilde{z},s) \in R$, it follows
from the condition (\ref{theorem-1-condition}) that
\begin{eqnarray}
|Pf(z,t)-Pf(\tilde{z},s)| &\leq& \int_{\R_+}
\left|f\left(\frac{z}{y} + q_t \frac{y-1}{y}, t\right)- f\left(\frac{\tilde{z}}{y}+ q_s\frac{y-1}{y},s\right)\right| y\nu(dy)\nonumber\\
&\leq& \int_{\R_+} \left[D|z-\tilde{z}|+ D|q_t-q_s|\,|y-1| +
F\left(y+|z|+ q_t|y-1|\right)\, |s-t|^{\frac12}\right]\nu(dy) \nonumber\\
&\leq& D|z-\tilde{z}| + D (\xi+1) \frac{e^{-rT}}{T} \left|\int_t^s
e^{ru} du\right| + F\big(\xi+q_t(\xi+1)+
|z|\big)\,|s-t|^{\frac12}\nonumber\\
&\leq& D|z-\tilde{z}| + \tilde{F} (1+|z|)
|s-t|^{\frac12},\label{eq:pf-holder}
\end{eqnarray}
in which $\tilde{F}$ only depends on $T$ and $\xi$. Since $R$ is a
bounded domain, the factor $1+|z|$ in (\ref{eq:pf-holder}) is
bounded in $R$, so $z \rightarrow Pf(z,t)$ is Lipschitz and $t
\rightarrow Pf(z,t)$ is H\"{o}lder, uniformly with respect to the
other variable. Now by Theorem 5.2 in Chapter 6 of
\cite{friedman}, the parabolic partial differential equation
(\ref{eqn-Jf-1-pf}) and (\ref{eqn-Jf-2-pf}) has a unique classical solution
in the bounded domain $R$. Moreover this solution can be
represented by
\begin{eqnarray*}
u(z,t) &=& \E^{\Q} \left\{e^{-\lambda\xi \tau}Jf(Z_{\tau}^{t,z},
\tau+t)+ \int_0^{\tau}e^{-\lambda\xi s}\lambda\cdot Pf(Z_s^{t,z},
t+s)\, ds\right\} \\
&=& Jf(z,t),
\end{eqnarray*}
in which the exit time $\tau \triangleq
\inf_{s\in[0,T-t]}\{Z_{s}^{t,z} = z_1 \text{ or } z_2\}\wedge (T-t)$. The
second equality follows from the definition of the operator $J$ in
(\ref{eq:def-J-exp}) and the strong Markov property of $Z^{t,z}$.

So far we have shown that $Jf$ agrees with the unique classical solution of
(\ref{eqn-Jf-1-pf}) and (\ref{eqn-Jf-2-pf}) in the bounded domain
$R$. Since this statement holds for arbitrary $R$, it is clear
that $Jf$ is a solution of the parabolic partial differential
equation (\ref{eqn-Jf-1}) and (\ref{eqn-Jf-2}) for all $(z,t)\in
\R\times[0,T]$. The uniqueness of the solution follows from
Corollary 4.4 in Chapter 6 in \cite{friedman}, since the
coefficients of the derivative operators in (\ref{eq:def-At})
satisfy linear and quadratic growth conditions respectively.
\end{proof}

\subsection{Properties of the Sequence of Functions Defined in (\ref{eq:def-vn})}\label{sec:analyze-vn}

Our first goal is to prove $z \rightarrow v_n(z,t)$ is Lipschitz and $t \rightarrow v_n(z,t)$ is H\"{o}lder
continuous for all $n$.
To this end we will apply Lemmas~\ref {lemma:lip}
and \ref{lemma:holder}. To be able to apply the latter lemma we need to show that
\begin{equation}\label{eq:def-Mn}
M_n \triangleq  \sup_{t\in[0,T]} \{v_n(0,t)\}<\infty, \quad \text{for } n\geq 0.
\end{equation}
In the next lemma, we will dominate the sequence of
constants $(M_n)_{n\geq 0}$ by a universal constant $M_{\infty}$,
which depends only on $T$.

\begin{lemma}\label{lemma:Minf}
Let us define the sequence of constants $(M_n)_{n\geq 0}$ as in
(\ref{eq:def-Mn}), then
\[
 M_n < M_{\infty} \triangleq \frac{U}{1-\alpha} + \frac{\alpha}{1-\alpha}
 \frac{B}{\xi} + K_1< \infty ,
\]
in which the constants $U$, $B$ and $\alpha$ are defined in Lemma
\ref{lemma:mjf}.
\end{lemma}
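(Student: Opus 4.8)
The plan is to prove the bound $M_n < M_\infty$ by induction on $n$, using Lemma~\ref{lemma:mjf} as the engine that controls how $M_{v_{n+1}} = M_{n+1}$ depends on $M_{v_n} = M_n$. The key observation is that Lemma~\ref{lemma:mjf} gives the recursive inequality
\[
M_{n+1} \leq U + \alpha\left(M_n + \frac{B}{\xi}\right),
\]
provided the hypothesis of that lemma applies, i.e. provided $z\mapsto v_n(z,t)$ is Lipschitz in $z$ uniformly in $t$ (with some constant, and with $M_n<\infty$). So before running the numerical recursion I must first verify the Lipschitz regularity is propagated along the sequence: $v_0(z,t) = (\zeta\cdot(z-K_1))^+$ is Lipschitz in $z$ with constant $1$, and by Lemma~\ref{lemma:lip} each $v_{n+1} = Jv_n$ is Lipschitz in $z$ with constant $\max\{1,D_n\}$, so by induction every $v_n$ is Lipschitz in $z$ with constant $1$ (the constant never exceeds $1$, since $\max\{1,1\}=1$). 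This takes care of the structural hypothesis needed to invoke Lemma~\ref{lemma:mjf} at each step.

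Next I set up the induction on the constants. For the base case, $M_0 = \sup_{t\in[0,T]}(\zeta\cdot(0-K_1))^+ \leq K_1 < M_\infty$ (more precisely $M_0 = (\mp K_1)^+ \le K_1$, and $K_1 < M_\infty$ since $U/(1-\alpha)$ and $\frac{\alpha}{1-\alpha}\frac{B}{\xi}$ are both nonnegative). For the inductive step, assume $M_n < M_\infty$. Applying the recursive inequality above and the induction hypothesis,
\[
M_{n+1} \leq U + \alpha M_n + \alpha\frac{B}{\xi} < U + \alpha M_\infty + \alpha\frac{B}{\xi}.
\]
It then suffices to check that $U + \alpha M_\infty + \alpha\frac{B}{\xi} \leq M_\infty$, which is exactly the fixed-point identity satisfied by $M_\infty$: substituting $M_\infty = \frac{U}{1-\alpha} + \frac{\alpha}{1-\alpha}\frac{B}{\xi} + K_1$ and using $0<\alpha<1$, one computes $U + \alpha M_\infty + \alpha\frac{B}{\xi} = M_\infty - (1-\alpha)K_1 \leq M_\infty$. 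This closes the induction and gives the strict inequality $M_n < M_\infty$ for all $n\geq 0$; finiteness of $M_\infty$ is immediate from $\alpha<1$ and finiteness of $U$, $B$, $K_1$.

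The only subtlety — and the main thing to be careful about rather than the main obstacle — is making sure the hypotheses of Lemma~\ref{lemma:mjf} are genuinely available at every stage of the induction: that lemma's conclusion $M_{Jf}\leq U+\alpha(M_f+B/\xi)$ is derived under the standing assumption that $f$ is Lipschitz in $z$ with constant $D$ and that $M_f<\infty$, and both of these must be part of the inductive hypothesis, not just $M_n<\infty$ in isolation. Since the Lipschitz constant stays bounded by $1$ uniformly in $n$ (as established above), the constants $U$ and $B$ appearing in Lemma~\ref{lemma:mjf} can be taken uniform in $n$ as well (they depend on $D$ and $T$ only, and $D\le 1$ throughout), so the recursion $M_{n+1}\leq U+\alpha(M_n+B/\xi)$ really does hold with the same $U,B,\alpha$ at every step — which is what makes the telescoped bound converge to $M_\infty$ rather than blowing up. Once this uniformity is in hand, the argument is a routine induction; there is no hard analytic step, the work having been front-loaded into Lemmas~\ref{lemma:lip} and \ref{lemma:mjf}.
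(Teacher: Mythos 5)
Your proof is correct and follows essentially the same route as the paper: both arguments run the recursion $M_{n+1}\le U+\alpha(M_n+B/\xi)$ from Lemma~\ref{lemma:mjf} by induction starting from $M_0\le K_1$. The only difference is cosmetic and in the final step: the paper proves the explicit partial-sum bound in~(\ref{eq:ubMn}) and then sums the geometric series, while you close the induction directly by verifying that $M_\infty$ is a super-fixed-point of the recursion map, i.e.\ $U+\alpha M_\infty+\alpha B/\xi=M_\infty-(1-\alpha)K_1\le M_\infty$. Your remark that the Lipschitz constant stays pinned at~$1$ (so that $U$ and $B$ are uniform in~$n$) is a useful clarification that the paper leaves implicit, relying on Lemma~\ref{lemma:lip} with $E=\max\{1,D\}$.
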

\begin{proof}
When $n=0$, by the definition of $v_0(\cdot, \cdot)$ in
(\ref{eq:def-vn}), we have
\[
M_0 = \sup_{t\in[0,T]}v_0(0,t) = (\zeta\cdot(0-K_1))^+ \leq K_1,
\]
in which the last inequality is saturated when $\zeta=-1$. It
follows from Lemma \ref{lemma:mjf}
that
\begin{equation}
M_{n+1} \leq U + \alpha\left(M_n + \frac{B}{\xi}\right), \quad
\text{for } n \geq 0,
\end{equation}
in which $\alpha<1$. It can be proven by induction that
\begin{equation} \label{eq:ubMn}
M_n \leq U\left(\sum_{i=0}^{n}\alpha^i - \alpha^n\right) + \alpha
\left(\sum_{i=0}^{n}\alpha^i - \alpha^n \right)\frac{B}{\xi} + \alpha^n K_1,
\quad \text{for } n\geq 0.
\end{equation}
Since $U$, $B$ and $\xi$ are positive constants and $0<\alpha<1$,
it is clear from (\ref{eq:ubMn}) that
\[
 M_n \leq \frac{U}{1-\alpha} +
 \frac{\alpha}{1-\alpha}\frac{B}{\xi} + K_1 = M_{\infty} < \infty.
\]
\end{proof}

\begin{lemma}\label{lemma:vn}
Let $(v_n(\cdot,\cdot))_{n\geq 0}$ be as in (\ref{eq:def-vn}).
We have that
\begin{equation}\label{eq:vn-lip}
\left|v_n(z,t) - v_n(\tilde{z},t) \right|\leq
\left|z-\tilde{z}\right|, \quad z, \tilde{z} \in \R
\end{equation}
and
\begin{equation}
|v_n(z,t)-v_n(z,s)| \leq F_n (1+|z|)\,(s-t)^{\frac12}, \quad 0\leq
t < s \leq T, \label{eq:vn-holder}
\end{equation}
in which $F_n$ are all finite constants depending on $T$.
\end{lemma}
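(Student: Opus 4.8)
The plan is to establish both estimates simultaneously by induction on $n$, leaning entirely on the structural results already proved: Lemma~\ref{lemma:lip} (preservation of Lipschitz continuity in $z$, with the sharp constant $\max\{1,D\}$), Lemma~\ref{lemma:Minf} (the uniform bound $M_n<M_{\infty}$, with $M_{\infty}$ depending only on $T$), and Lemma~\ref{lemma:holder} (Lipschitz-in-$z$ together with $M_f<\infty$ implies H\"{o}lder-$1/2$ continuity in $t$). I first note that every $v_n$ maps $\R\times[0,T]$ into $\R_+$: this holds for $v_0$ and is preserved by $J$, since $Jf$ is an expectation of nonnegative quantities whenever $f\geq 0$; hence the hypotheses ``$f:\R\times[0,T]\to\R_+$'' of those lemmas are satisfied at every step.

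For the base case $n=0$, recall $v_0(z,t)=(\zeta\cdot(z-K_1))^+$. Since $x\mapsto x^+$ is $1$-Lipschitz and $z\mapsto \zeta\cdot(z-K_1)$ has slope $\pm 1$, the bound (\ref{eq:vn-lip}) holds for $n=0$; and since $v_0$ does not depend on $t$, (\ref{eq:vn-holder}) holds trivially with $F_0=0$.

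For the inductive step, assume (\ref{eq:vn-lip}) holds for $v_n$, i.e. $z\mapsto v_n(z,t)$ is Lipschitz with constant $D=1$, uniformly in $t$. By Lemma~\ref{lemma:Minf}, $M_n\leq M_{\infty}<\infty$. Applying Lemma~\ref{lemma:lip} with $D=1$ shows that $v_{n+1}=Jv_n$ is Lipschitz in $z$ with constant $E=\max\{1,D\}=1$, which is precisely (\ref{eq:vn-lip}) for $n+1$. Next, since $z\mapsto v_n(z,t)$ is Lipschitz and $M_n<\infty$, Lemma~\ref{lemma:holder} applies and yields (\ref{eq:vn-holder}) for $v_{n+1}$ with a constant $F_{n+1}$ depending only on $\lambda$, $\xi$, $T$ and $M_n$; since $M_n\leq M_{\infty}$ and $M_{\infty}$ depends only on $T$, we may take $F_{n+1}$ to depend only on $T$ (indeed one may even take the single constant $F\triangleq\sup_n F_n<\infty$). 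This completes the induction.

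The only delicate point --- and it is exactly what makes the induction close --- is that the Lipschitz constant in $z$ must not deteriorate as $n$ grows. This is guaranteed because Lemma~\ref{lemma:lip} produces the sharp constant $\max\{1,D\}$, of which $D=1$ is a fixed point; had we only known ``$Jf$ is Lipschitz for some constant'', iterating $J$ would have risked a geometric blow-up of the Lipschitz modulus. The companion requirement is the $n$-uniform bound $M_n\leq M_{\infty}$ from Lemma~\ref{lemma:Minf}, which is precisely what allows Lemma~\ref{lemma:holder} to be invoked at every stage of the iteration with a H\"{o}lder constant that does not depend on $n$. Beyond these two observations the argument is routine.
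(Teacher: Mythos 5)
Your proof is correct and follows essentially the same route as the paper's: induction on $n$, using Lemma~\ref{lemma:lip} (with the observation that $D=1$ is a fixed point of $D\mapsto\max\{1,D\}$) to propagate the Lipschitz estimate, and then Lemma~\ref{lemma:holder} together with the uniform bound $M_n<M_\infty$ from Lemma~\ref{lemma:Minf} to obtain the H\"{o}lder estimate in $t$. Your proof is somewhat more explicit than the paper's about why the Lipschitz constant does not degrade under iteration, and about the nonnegativity of the $v_n$, but the underlying argument is identical.
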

\begin{proof}
>From the definition of $v_0(\cdot,\cdot)$ in (\ref{eq:def-vn}), we
have
\begin{equation}
\left|v_0(z,t)-v_0(\tilde{z},t)\right| =
\left|(\zeta\cdot(z-K_1))^+ - (\zeta\cdot(\tilde{z}-K_1))^+\right|
\leq \left|z-\tilde{z}\right|. \label{eq:v0-lip}
\end{equation}
Now, the inequality (\ref{eq:vn-lip}) follows from induction and
Lemma \ref{lemma:lip}. On the other hand, (\ref{eq:vn-holder})
holds as a result of Lemma \ref{lemma:holder} which we can apply as a result of Lemma~\ref{lemma:Minf}.
\end{proof}

As a corollary of Remark \ref{remark-tildeDE} and Lemma
\ref{lemma:Minf}, we can show that
$(v_n(z,t))_{n\geq 0}$ satisfies a linear growth condition in the $z$-variable, uniformly in the $t$-variable. This will be used to show that this sequence has a limit.

\begin{corollary}\label{cor:dominatefun}
For any $n\geq 0$,
\begin{equation}\label{eq:def-L}
v_n(z,t) \leq M_{\infty} + |z| \triangleq L(z), \quad (z,t)\in
\R\times[0,T].
\end{equation}
\end{corollary}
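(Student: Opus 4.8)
The plan is to induct on $n$, using the linear growth bound already distilled from the Lipschitz estimates and the uniform bound $M_n < M_\infty$ from Lemma~\ref{lemma:Minf}. The base case $n=0$ is immediate: from the definition in (\ref{eq:def-vn}), $v_0(z,t) = (\zeta\cdot(z-K_1))^+ \leq |z| + K_1 \leq |z| + M_\infty$, since $K_1 \leq M_\infty$ by the formula for $M_\infty$ in Lemma~\ref{lemma:Minf}.

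For the inductive step, I would appeal to Remark~\ref{remark-2.2}, specifically the linear growth inequality (\ref{eq:linear-jf}): since $v_{n}$ is Lipschitz in $z$ with constant $1$ (by Lemma~\ref{lemma:vn}, equation (\ref{eq:vn-lip})), we have $v_{n+1}(z,t) = Jv_n(z,t) \leq Jv_n(0,t) + |z|$, where the Lipschitz constant $E$ appearing in (\ref{eq:linear-jf}) equals $\max\{1,1\}=1$ here because $v_n$ has Lipschitz constant $1$. Now $Jv_n(0,t) = v_{n+1}(0,t) \leq M_{n+1} < M_\infty$ by Lemma~\ref{lemma:Minf}. Combining, $v_{n+1}(z,t) \leq M_\infty + |z| = L(z)$, which closes the induction.

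There is essentially no obstacle here — this corollary is a bookkeeping consequence that collects the uniform-in-$n$ bound $M_n < M_\infty$ (the genuine work, done in Lemma~\ref{lemma:Minf} via the contraction-type recursion $M_{n+1} \leq U + \alpha(M_n + B/\xi)$ with $\alpha < 1$) together with the propagation of the Lipschitz constant $1$ (done in Lemma~\ref{lemma:vn} via Lemma~\ref{lemma:lip}, which preserves Lipschitz constant $1$ since $\max\{1,1\}=1$). The only point requiring a moment's care is confirming that the relevant Lipschitz constant stays exactly $1$ rather than growing, so that the coefficient of $|z|$ in $L(z)$ is $1$; this is exactly the content of (\ref{eq:vn-lip}). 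The linear growth bound $L(z)$ will subsequently serve as the dominating function needed to pass limits through expectations (e.g.\ dominated convergence) when showing the sequence $(v_n)$ is Cauchy and identifying its limit.
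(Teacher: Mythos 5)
Your argument is correct and follows the same route as the paper's: combine the Lipschitz-constant-$1$ bound on $v_n$ (Lemma~\ref{lemma:vn}, via Lemma~\ref{lemma:lip}) with the linear growth consequence in Remark~\ref{remark-2.2} to get $v_n(z,t)\leq M_n+|z|$, and then invoke the uniform bound $M_n<M_\infty$ from Lemma~\ref{lemma:Minf}. The only cosmetic difference is that you call this an induction when the step for $n+1$ does not actually use the bound for $v_n$ --- it is a direct estimate valid for each $n$ once Lemmas~\ref{lemma:vn} and~\ref{lemma:Minf} are in hand --- but this does not affect correctness.
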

\begin{proof}
An induction argument using the inequality (\ref{eq:v0-lip}), Lemma \ref{lemma:lip} and Remark \ref{remark-tildeDE}
gives
\[
v_n(z,t) \leq M_n + |z|, \quad \text{for } n\geq 0.
\]
Now, the result follows from Lemma \ref{lemma:Minf}.
\end{proof}

As a result of Corollary~\ref{cor:dominatefun}, next we show that, for a fixed $(z,t)\in \R\times [0,T]$, the sequence
$\{v_n(z,t)\}_{n\geq 0}$ is a Cauchy sequence.

\begin{lemma}\label{lemma:cauchy}
For any $(z,t)\in \R\times [0,T]$ and $n,m\geq 0$.
\begin{equation}
 \left|v_{n+m}(z,t)-v_m(z,t)\right| \leq 2 M_{\infty} A^m + 2\left(\frac{1}{rT} \frac{\xi+1}{\xi} +
 \frac{C}{\xi}\right)\left[\sum_{i=0}^{m} A^{m-i}B^i -B^m\right] +
 2|z|B^m,\label{eq:cauchy}
\end{equation}
where $A=1-e^{-\lambda\xi(T-t)}$, $B=1-e^{-\lambda(T-t)}$ and $C$
is the same constant as in (\ref{eq:eZ}).
\end{lemma}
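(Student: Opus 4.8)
The plan is to prove the Cauchy estimate \eqref{eq:cauchy} by induction on $m$, exploiting the representation of $J$ in \eqref{eq:def-j-z} together with the linear growth bound from Corollary~\ref{cor:dominatefun} and the Lipschitz bound from Lemma~\ref{lemma:vn}. The key observation is that $v_{n+m+1}-v_{m+1} = Jv_{n+m}-Jv_m$, and since the terminal-payoff term in \eqref{eq:def-j-z} does not depend on the test function, it cancels in the difference. What remains is
\[
\left|v_{n+m+1}(z,t)-v_{m+1}(z,t)\right| \leq \E^{\Q}\left\{\int_0^{T-t} e^{-\lambda\xi s}\lambda\cdot \left|P v_{n+m}\left(zH_s^0+b_s, t+s\right) - P v_m\left(zH_s^0+b_s, t+s\right)\right|\, ds\right\}.
\]
So the first step is to obtain a pointwise bound on $|Pf(z,t)-Pg(z,t)|$ in terms of $\sup_{z',t'}\frac{|f(z',t')-g(z',t')|}{1+|z'|}$ (or something of that flavor): from the definition of $P$ in \eqref{eq:op-P}, $|Pf-Pg|(z,t) \leq \int_{\R_+}|f-g|(z/y+q_t(y-1)/y,\,t)\,y\,\nu(dy)$, and bounding $|f-g|$ at the shifted argument by the inductive hypothesis applied at that argument will produce terms involving $|z|/y$, $q_t|y-1|/y$, integrated against $y\nu(dy)$, i.e. terms controlled by $\xi$ and $\xi+1$ and $1/(rT)$. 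This is where the constants $\frac{1}{rT}\frac{\xi+1}{\xi}$, $C/\xi$ in \eqref{eq:cauchy} come from.

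Second, I would set up the induction cleanly. For $m=0$: $v_n(z,t)$ and $v_0(z,t)$ are both bounded by $M_\infty+|z|$ (Corollary~\ref{cor:dominatefun}), so $|v_n-v_0|\leq 2M_\infty+2|z|$, which matches the right-hand side of \eqref{eq:cauchy} at $m=0$ (the middle sum is empty, $B^0=1$). For the inductive step, assume \eqref{eq:cauchy} holds at level $m$ for all $n$; write $|v_{n+m+1}-v_{m+1}|(z,t)$ via the integral representation above, insert the $P$-difference bound from step one, and then substitute the inductive hypothesis evaluated at the point $zH_s^0+b_s$. This introduces the factors $\E^{\Q}\{H_s^0\}=e^{\mu s}$ (from \eqref{eq:eh}) and $\E^{\Q}\{|b_s|\}\leq C$ (from \eqref{eq:est-b}), and the time-integral $\int_0^{T-t}e^{-\lambda\xi s}\lambda e^{\mu s}\,ds$; recalling $\mu=\lambda(\xi-1)$, this integral evaluates to $\frac{e^{(\mu-\lambda\xi)(T-t)}\,-\,1}{\mu-\lambda\xi}\cdot\lambda = 1-e^{-\lambda(T-t)} = B$, which is precisely how the factor $B$ (rather than $A$) attaches to the $|z|$-term and to each successive term in the telescoping sum. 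The $M_\infty$-term and the constant term instead pick up a factor $1-e^{-\lambda\xi(T-t)} = A$ because those contributions are not multiplied by $H_s^0$ inside $P$ — they ride along with the $\xi f(0,\cdot)$ piece, integrated against $e^{-\lambda\xi s}\lambda$, giving $A$. Carefully tracking which pieces scale by $A$ and which by $B$, then collecting, yields the sum $\sum_{i=0}^{m+1}A^{m+1-i}B^i - B^{m+1}$ and the advance of the induction.

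The main obstacle, as the introduction warns, is precisely the bookkeeping of the two distinct decay rates $A$ and $B$: the naive approach would bound everything by a single contraction factor, but because the payoff term is unbounded and $v_n$ only has linear growth, the $|z|$-dependent part of the error genuinely propagates through the stochastic flow $z\mapsto zH_s^0+b_s$ and hence contracts at the rate $B=1-e^{-\lambda(T-t)}$ coming from $\E^{\Q}\{H_s^0\}=e^{\mu s}$ combined with the discount $e^{-\lambda\xi s}$, whereas the bounded part contracts at $A=1-e^{-\lambda\xi(T-t)}$. One must keep these separate throughout the induction, which is why the estimate in \eqref{eq:cauchy} has the somewhat intricate mixed form $\sum_i A^{m-i}B^i$ rather than a clean geometric bound. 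A secondary technical point is ensuring all the moment estimates \eqref{eq:eZ}, \eqref{eq:eh}, \eqref{eq:est-b} are invoked with the correct initial condition (here effectively with initial value $z$, but the $b_s$-shift handled via the $z=0$ bounds), and that $q_{t+s}\leq 1/(rT)$ is used uniformly. Once the $P$-difference bound and the $A$-versus-$B$ accounting are in place, the induction closes mechanically.
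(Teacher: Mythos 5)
Your proposal is correct and follows essentially the same route as the paper's proof: induction on $m$ with the same base case from Corollary~\ref{cor:dominatefun}, the same cancellation of the terminal term in $Jv_{n+m}-Jv_m$, the same insertion of the inductive hypothesis at the shifted argument inside $P$, and — crucially — the same identification of the two contraction rates $A$ (from $\int_0^{T-t}e^{-\lambda\xi s}\lambda\xi\,ds$ acting on the bounded pieces) and $B$ (from $\int_0^{T-t}e^{(\mu-\lambda\xi)s}\lambda\,ds$ acting on the $|z|$-dependent piece, using $\mu-\lambda\xi=-\lambda$), together with the moment estimate $\E^{\Q}\{|Z_s^{t,z}|\}\leq |z|e^{\mu s}+C$. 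The only small omission is that you do not spell out the use of $1-e^{-\lambda\xi(T-t-s)}\leq A$ and $1-e^{-\lambda(T-t-s)}\leq B$ for $s\geq 0$ when substituting the inductive hypothesis at time $t+s$, but this is implicit in your setup and the paper handles it the same way.
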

\begin{proof}
We will prove the estimation (\ref{eq:cauchy}) by induction on
$m$. When $m=0$, it follows from Corollary \ref{cor:dominatefun}
that
\[
 |v_n(z,t)-v_0(z,t)|\leq 2M_{\infty} + 2|z|.
\]
It is clear that (\ref{eq:cauchy}) is satisfied in this case.
Assuming (\ref{eq:cauchy}) holds for $m$ case,  we will show that it holds when we replace $m$ by $m+1$. From the definition of $\{v_n(\cdot,\cdot)\}_{n\geq
0}$, we have
\begin{equation*}
 |v_{n+m+1}(z,t)-v_{m+1}(z,t)| \leq \E^{\Q} \left\{\int_0^{T-t} ds\, e^{-\lambda\xi s} \lambda\cdot
 \left|Pv_{n+m}(Z_s^{t,z},t+s)-Pv_m(Z_s^{t,z},t+s)\right|\right\}.
\end{equation*}
In the right hand side of above inequality, the induction assumption
gives us
\begin{eqnarray}
&&\left|Pv_{n+m}(Z_s^{t,z},t+s)-Pv_m(Z_s^{t,z},t+s)\right|\nonumber\\
&&\leq \int_{\R_+}
\left|v_{n+m}\left(\frac{Z_s^{t,z}}{y}+q_{t+s}\frac{y-1}{y},t+s\right)-v_m\left(\frac{Z_s^{t,z}}{y}+q_{t+s}\frac{y-1}{y},t+s\right)\right|y\nu(dy)\nonumber\\
&&\leq 2\xi M_{\infty}\left(1-e^{-\lambda\xi(T-t-s)}\right)^m\nonumber\\
&&\quad
+2\xi\left(\frac{1}{rT}\frac{\xi+1}{\xi}+\frac{C}{\xi}\right)
\left[ \sum_{i=0}^{m} \left(1-e^{-\lambda(T-t-s)}\right)^i
\left(1-e^{-\lambda\xi(T-t-s)}\right)^{m-i} - \left(1-e^{-\lambda(T-t-s)}\right)^m \right] \nonumber\\
&&\quad +2\int_{\R_+} \left(\frac{|Z_s^{t,z}|}{y}+q_{t+s}
\frac{|y-1|}{y}\right)\left(1-e^{-\lambda(T-t-s)}\right)^m
y\nu(dy)\nonumber\\
&&\leq 2\xi M_{\infty} (1-e^{-\lambda\xi(T-t)})^m \nonumber\\
&&\quad + 2\xi\left(\frac{1}{rT}\frac{\xi+1}{\xi} +
\frac{C}{\xi}\right) \left[ \sum_{i=0}^{m}\left(1-e^{-\lambda(T-t)}\right)^i
\left(1-e^{-\lambda\xi(T-t)}\right)^{m-i} - \left(1-e^{-\lambda(T-t)}\right)^m \right]\nonumber\\
&&\quad +2|Z_s^{t,z}|\left(1-e^{-\lambda(T-t)}\right)^m +
\frac{2}{rT}\left(\xi+1\right)\left(1-e^{-\lambda(T-t)}\right)^m.
\label{eq:est-ind}
\end{eqnarray}
In (\ref{eq:est-ind}), the third inequality follows, because $q_{t+s}\leq
\frac{1}{rT}$, and for $m\geq 1$
\[
 \sum_{i=0}^{m-1} \left(1-e^{-\lambda(T-t-s)}\right)^i \left(1-e^{-\lambda\xi(T-t-s)}\right)^{m-i} \leq  \sum_{i=0}^{m-1} \left(1-e^{-\lambda(T-t)}\right)^i \left(1-e^{-\lambda\xi(T-t)}\right)^{m-i},
\]
since $s\geq 0$.

On the other hand, from (\ref{sol z-t}), we have
\[|Z_s^{t,z}|\leq |z|H_{s}^0 + |b_{s}|,\]
where $\E^{\Q} \{|b_{s}|\}= \E^{\Q} \{|Z_s^{t,0}|\}\leq C$
from  (\ref{eq:eZ}). Therefore we have
\begin{equation}\label{eq:est-ind-2}
 \E^{\Q}\{|Z_s^{t,z}|\}\leq |z|e^{\mu s}+C.
\end{equation}
Taking expectation on both side of (\ref{eq:est-ind}) and plugging
(\ref{eq:est-ind-2}) back into (\ref{eq:est-ind}), we have
\begin{eqnarray}
&&\E^{\Q}\left|Pv_{n+m}(Z_s^{t,z},t+s)-Pv_m(Z_s^{t,z},t+s)\right|\nonumber\\
&&\leq 2\xi M_{\infty}  A^m
+2\xi\left(\frac{1}{rT}\frac{\xi+1}{\xi}+
\frac{C}{\xi}\right) \left[\sum_{i=0}^{m}A^{m-i}B^i-B^m\right]
\nonumber\\
&&\quad +2|z|e^{\mu s}B^m +
2\left(\frac{1}{rT}(\xi+1)+C\right)B^m. \label{eq:est-ind-3}
\end{eqnarray}
Multiplying both sides  of
(\ref{eq:est-ind-3}) with $e^{-\lambda\xi s}\lambda$ and integrating with respect to $s$ over
$[0,T-t]$, and using the identity $\mu-\lambda\xi=-\lambda$, we obtain the inequality
(\ref{eq:cauchy}) with $m$ replaced by $m+1$.
\end{proof}

As a result of the previous lemma we can define the pointwise
limit for the sequence $(v_n(\cdot,\cdot))_{n\geq 0}$:
\begin{equation}
 v_{\infty}(z,t) \triangleq \lim_{n\geq 0} v_n(z,t), \quad
 (z,t)\in \R\times[0,T].
\end{equation}

Moreover, as a corollary of Lemma \ref{lemma:cauchy}, we have
\begin{corollary}\label{cor:uniform-conv}
For any compact domain $\mathcal{D}\subset \R$, $v_n(z,t)$
converges uniformly to $v_{\infty}(z,t)$ for $(z,t)\in
\mathcal{D}\times[0,T]$. Moreover,
\begin{equation}\label{eq:uniform-conv}
|v_{\infty}(z,t)-v_n(z,t)|\leq
M_{\mathcal{D}}\left(1-e^{-\lambda\eta(T-t)}\right)^n,
\end{equation}
where $M_{\mathcal{D}}$ is a constant depending on $\mathcal{D}$
and $\eta=\max\{\xi,1\}$.
\end{corollary}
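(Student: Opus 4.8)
I would derive \eqref{eq:uniform-conv} by passing to the limit in the Cauchy estimate of Lemma~\ref{lemma:cauchy}, and then read the uniform convergence off that bound. Fix a compact set $\mathcal{D}\subset\R$ and put $R_{\mathcal{D}}\triangleq\sup_{z\in\mathcal{D}}|z|<\infty$. The crucial observation is that the right-hand side of \eqref{eq:cauchy} depends on the index $m$ and on $(z,t)$ but \emph{not} on $n$; since $v_{n+m}(z,t)\to v_{\infty}(z,t)$ as $n\to\infty$ for each fixed $m$, letting $n\to\infty$ in \eqref{eq:cauchy}, writing $\sum_{i=0}^{m}A^{m-i}B^i-B^m=\sum_{i=0}^{m-1}A^{m-i}B^i$, and relabelling $m$ as $n$ yields
\[
|v_{\infty}(z,t)-v_n(z,t)|\leq 2M_{\infty}A^n+2\Bigl(\tfrac{1}{rT}\tfrac{\xi+1}{\xi}+\tfrac{C}{\xi}\Bigr)\sum_{i=0}^{n-1}A^{n-i}B^i+2|z|B^n,
\]
with $A=1-e^{-\lambda\xi(T-t)}$ and $B=1-e^{-\lambda(T-t)}$ as in Lemma~\ref{lemma:cauchy}. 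It then remains to bound each of the three terms by a constant (depending only on $\mathcal{D}$, $T$, $\lambda$, $\xi$) times $\rho(t)^n$, where $\rho(t)\triangleq 1-e^{-\lambda\eta(T-t)}$ and $\eta=\max\{\xi,1\}$.

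Since $\eta\mapsto 1-e^{-\lambda\eta(T-t)}$ is increasing, $\rho(t)=\max\{A,B\}$, so $A\leq\rho(t)$, $B\leq\rho(t)$, and on $\mathcal{D}$ the first and third terms are at most $2M_{\infty}\rho(t)^n$ and $2R_{\mathcal{D}}\rho(t)^n$. For the middle term I would use the closed-form geometric sum rather than a crude bound: when $\mu\neq 0$, i.e.\ $\xi\neq 1$, one has $A\neq B$ for $t<T$ and $\sum_{i=0}^{n-1}A^{n-i}B^i=\dfrac{A(A^n-B^n)}{A-B}\leq\dfrac{A}{|A-B|}\,\rho(t)^n$, and a short Taylor expansion as $t\uparrow T$ shows $\kappa\triangleq\sup_{t\in[0,T)}\dfrac{A(t)}{|A(t)-B(t)|}<\infty$; at $t=T$ the estimate is trivial since $v_n(\cdot,T)=v_0(\cdot,T)$ for all $n$. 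Collecting the bounds gives \eqref{eq:uniform-conv} with $M_{\mathcal{D}}\triangleq 2M_{\infty}+2R_{\mathcal{D}}+2\kappa\bigl(\tfrac{1}{rT}\tfrac{\xi+1}{\xi}+\tfrac{C}{\xi}\bigr)$; and because $\rho(t)\leq\rho(0)=1-e^{-\lambda\eta T}<1$ uniformly on $[0,T]$, the right-hand side tends to $0$ uniformly on $\mathcal{D}\times[0,T]$, which is exactly the asserted uniform convergence.

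The only genuinely delicate step is the boundedness of the prefactor $\kappa$ of the geometric sum, which is why I avoid the naive estimate $\sum_{i=0}^{n-1}A^{n-i}B^i\leq n\,\rho(t)^n$ (that would leave an unwanted polynomial factor $n$). The borderline case $\xi=1$, in which $\mu=0$ and $A=B=\rho(t)$ so the middle sum equals $n\,\rho(t)^n$, requires a separate routine remark; it is a non-generic situation (in particular it does not occur for typical parameters in Kou's and Merton's models) and does not affect the structure of the argument. Everything else reduces to the bookkeeping above, so I expect the limit-passage in \eqref{eq:cauchy} together with the geometric-sum computation to constitute the entire content of the proof.
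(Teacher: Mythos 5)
Your route---letting $n\to\infty$ in the Cauchy estimate \eqref{eq:cauchy}, relabelling $m$ as $n$, and bounding $|z|$ by the radius of $\mathcal{D}$---is exactly what the paper's one-sentence proof invokes, so structurally you and the authors agree. You are, however, considerably more careful than the paper: the authors never explain how the middle term of \eqref{eq:cauchy} gets absorbed into $M_{\mathcal{D}}\rho(t)^n$ with $\rho(t)=1-e^{-\lambda\eta(T-t)}$, $\eta=\max\{\xi,1\}$, and the obvious term-by-term bound only gives $n\,\rho(t)^n$, which has the wrong form. Your closed-form $\sum_{i=0}^{n-1}A^{n-i}B^i = A(A^n-B^n)/(A-B)$, together with the observation that $\kappa=\sup_{t\in[0,T)}A/|A-B|$ is finite when $\xi\neq1$ (the ratio extends continuously to $t=T$ with limit $\xi/|\xi-1|$), supplies precisely the bookkeeping the paper elides, and is the right way to get the stated rate.

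Where the proposal falls short is in waving away $\xi=1$ as a ``routine remark.'' It is not routine: there $A=B$, the middle term equals $n\,\rho(t)^n$, and no constant $M_{\mathcal{D}}$ independent of $n$ can dominate $n\,\rho(t)^n$ by $M_{\mathcal{D}}\rho(t)^n$, so \eqref{eq:uniform-conv} with $\eta=1$ does \emph{not} follow from Lemma~\ref{lemma:cauchy} as written. The paper is silent on this case as well. To close it one would need either a sharper Cauchy estimate when $\mu=0$, or a weakened conclusion---e.g.\ replace $\eta$ by any $\eta'>1$ and absorb the factor $n$ via $n\bigl(\rho(t)/\rho'(t)\bigr)^n\leq C$, where $\rho'(t)=1-e^{-\lambda\eta'(T-t)}$ and $\sup_{t\in[0,T]}\rho(t)/\rho'(t)<1$. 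So: your proof is correct and genuinely completes the paper's argument for $\xi\neq1$, but the $\xi=1$ case is an actual unresolved gap, not a remark to be deferred.
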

\begin{proof}
Observing that the right hand side of (\ref{eq:cauchy}) is
independent of $n$ and $|z|$ is uniformly bounded in
$\mathcal{D}$, the result follows from Lemma \ref{lemma:cauchy}.
\end{proof}

In the following, we will begin to study properties of
$v_{\infty}(\cdot,\cdot)$.
\begin{lemma}\label{lemma:fixpt}
The function $v_{\infty}$ is a fixed point of the
operator $J$.
\end{lemma}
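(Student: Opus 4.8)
The plan is to show that $v_\infty = Jv_\infty$ by passing to the limit in the defining recursion $v_{n+1} = Jv_n$. Since $v_{n+1}(z,t) \to v_\infty(z,t)$ pointwise, it suffices to show that $Jv_n(z,t) \to Jv_\infty(z,t)$ pointwise. Using the representation of $J$ in (\ref{eq:def-j-z}), I would write
\begin{equation*}
|Jv_n(z,t) - Jv_\infty(z,t)| \leq \E^{\Q}\left\{\int_0^{T-t} e^{-\lambda\xi s}\lambda \cdot \left|P(v_n - v_\infty)(zH_s^0 + b_s, t+s)\right|\,ds\right\},
\end{equation*}
since the two functions share the same terminal data and the $(\zeta\cdot(zH_{T-t}^0 + b_{T-t} - K_1))^+$ term cancels. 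So the whole matter reduces to controlling $P(v_n - v_\infty)$ along the trajectory $Z_s^{t,z}$ and then dominating the $s$-integral.

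First I would obtain a pointwise estimate on $P(v_n - v_\infty)$. From the definition of $P$ in (\ref{eq:op-P}) and Lemma~\ref{lemma:cauchy} (with $m$ replaced by $n$, and $z$ replaced by the argument $\frac{Z_s^{t,z}}{y} + q_{t+s}\frac{y-1}{y}$ inside the integrand), one gets
\begin{equation*}
\left|P(v_n - v_\infty)(Z_s^{t,z}, t+s)\right| \leq \int_{\R_+}\left|v_n - v_\infty\right|\!\left(\tfrac{Z_s^{t,z}}{y} + q_{t+s}\tfrac{y-1}{y}, t+s\right) y\,\nu(dy),
\end{equation*}
and the bound from Lemma~\ref{lemma:cauchy} (letting the ``$m$'' index there go to infinity so that only the $n$-th power terms survive) shows this is dominated by a quantity of the form $\left(\text{const}\cdot(1 + |Z_s^{t,z}|) + \text{const}\right)\cdot\left(1 - e^{-\lambda\eta(T-t)}\right)^{\,\text{something}}$ that tends to $0$ as $n\to\infty$ for each fixed $s$, $z$, $t$. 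Combining this with the moment estimate $\E^{\Q}\{|Z_s^{t,z}|\} \leq |z|e^{\mu s} + C$ from (\ref{eq:est-ind-2}) (equivalently (\ref{eq:eZ})), the integrand $e^{-\lambda\xi s}\lambda\cdot\E^{\Q}|P(v_n - v_\infty)(Z_s^{t,z}, t+s)|$ is bounded, uniformly in $n$ and $s\in[0,T-t]$, by an integrable function of $s$ and tends to $0$ pointwise in $s$. By dominated convergence the $s$-integral goes to $0$, hence $Jv_n(z,t)\to Jv_\infty(z,t)$. Since also $v_{n+1}(z,t) = Jv_n(z,t)\to v_\infty(z,t)$, uniqueness of limits gives $Jv_\infty(z,t) = v_\infty(z,t)$ for every $(z,t)\in\R\times[0,T]$.

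The main obstacle is the interchange of limit and expectation/integration: because the payoff $\left(\zeta\cdot(z-K_1)\right)^+$ is unbounded, $v_n$ and $v_\infty$ grow linearly in $z$ (Corollary~\ref{cor:dominatefun}), so one cannot simply bound $|v_n - v_\infty|$ by a constant. The resolution is to use the precise Cauchy estimate (\ref{eq:cauchy}), which separates a ``geometric-in-$n$'' factor from an affine-in-$z$ factor, together with the linear-growth moment bounds on $Z_s^{t,z}$ — this produces an $n$-independent integrable dominating function in $s$ while still forcing the integrand to $0$. A minor point to check is that all the constants appearing (e.g. $C$, $M_\infty$, and the coefficients in Lemma~\ref{lemma:cauchy}) are finite and independent of $n$, which is exactly what Lemmas~\ref{lemma:Minf} and \ref{lemma:cauchy} guarantee.
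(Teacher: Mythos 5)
Your proposal is correct, and it is a genuinely different route from the paper's. The paper passes the limit $n\to\infty$ directly through the representation (\ref{eq:def-J-exp}) of $J$ by appealing to the dominated convergence theorem three times (once for the $\nu$-integral in $P$, once for the $ds$-integral, once for the $\E^{\Q}$), using the $n$-independent dominating function $L(z)=M_\infty+|z|$ from Corollary~\ref{cor:dominatefun} together with the integrability estimate $\E^{\Q}\{PL(Z_s^{t,z})\}\leq \xi M_\infty + \tfrac{1}{rT}(\xi+1)+C(1+|z|)$. You instead estimate the discrepancy $|Jv_n-Jv_\infty|$ directly and feed in the quantitative Cauchy bound from Lemma~\ref{lemma:cauchy}, which decouples a geometric-in-$n$ factor from an affine-in-$z$ factor; combined with $\E^{\Q}\{|Z_s^{t,z}|\}\le |z|e^{\mu s}+C$ and the fact that $\mu-\lambda\xi=-\lambda<0$, this produces a bound on $|Jv_n(z,t)-Jv_\infty(z,t)|$ that is uniform in $s$ and tends to $0$, so you can even avoid a separate DCT in $s$ and simply multiply by $\lambda T$. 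Both proofs ultimately rest on the same two ingredients — Lemma~\ref{lemma:Minf} (finiteness of $M_\infty$) and the first-moment estimate (\ref{eq:eZ}) — but yours is quantitative, essentially proving continuity of $J$ in a $(1+|z|)$-weighted sup norm with an explicit rate, which is stronger than the qualitative DCT argument and fits naturally with the contraction-mapping flavor of the iteration.

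One small imprecision: when you say ``letting the `$m$' index in Lemma~\ref{lemma:cauchy} go to infinity so that only the $n$-th power terms survive,'' you have the roles of the two indices reversed. In (\ref{eq:cauchy}) the geometric powers are in $m$, so to bound $|v_\infty-v_n|$ you fix the second index equal to $n$, send the first index to infinity, and then read off the surviving $n$-th power terms (equivalently, relabel $m\to n$). You also implicitly use that the constants $A=1-e^{-\lambda\xi(T-t')}$ and $B=1-e^{-\lambda(T-t')}$ appearing when the Cauchy estimate is applied at time $t'=t+s\ge t$ are dominated by their values at $t'=t$, so a single pair $(A,B)$ works uniformly over $s\in[0,T-t]$; this is worth a sentence. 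Neither point affects the validity of the argument.
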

\begin{proof}
For any $s \in [0,T-t]$,
\begin{eqnarray}
\E^{\Q}\left\{PL(Z_s^{t,z})\right\} &=& \E^{\Q}\left\{\int_{\R_+}
L\left(\frac{Z_s^{t,z}}{y} +
q_{t+s}\frac{y-1}{y}\right) y \nu(dy)\right\} \nonumber\\
&\leq& \E^{\Q}\left\{\int_{\R_+} \left[M_{\infty} +
\frac{|Z_s^{t,z}|}{y} +
q_{t+s}\frac{|y-1|}{y}\right] y \nu(dy)\right\} \nonumber\\
&\leq& \xi M_{\infty} + \frac{1}{rT}(\xi+1) + C(1+|z|). \label{eq:Lintegrable}
\end{eqnarray}
As a result, we have
\begin{eqnarray}
v_{\infty}(z,t) &=& \lim_{n\geq 0} v_{n+1}(z,t) \nonumber\\
&=& \lim_{n\geq
0}\E^{\Q}\left\{e^{-\lambda\xi(T-t)}\left(\zeta\cdot\left(Z_{T-t}^{t,z}-K_1\right)\right)^+
+ \int_0^{T-t} e^{-\lambda\xi s}\lambda \cdot (Pv_n)(Z_s^{t,z},t+s)\,
ds\right\}\nonumber\\
&=&\E^{\Q} \left\{e^{-\lambda\xi(T-t)}\left(\zeta\cdot\left(Z_{T-t}^{t,z}-K_1\right)\right)^+
+ \int_0^{T-t} e^{-\lambda\xi s}\lambda \cdot (P\lim_{n\geq
0}v_n)(Z_s^{t,z},t+s)\,
ds\right\}\nonumber\\
&=& Jv_{\infty}(z,t). \label{eq:Jfixpt}
\end{eqnarray}
 The third equality follows by applying
dominated convergence theorem three times. We can use the dominated convergence theorem due to Corollary \ref{cor:dominatefun} and
(\ref{eq:Lintegrable}).
\end{proof}

Using Lemmas \ref{lemma:vn} and (\ref{eq:uniform-conv}), we can
show $z \rightarrow v_{\infty}(z,t)$ is Lipschitz continuous and
$t \rightarrow v_{\infty}(z,t)$ is H\"{o}lder continuous.

\begin{lemma}\label{lemma:vinf-lip}
$v_{\infty}(\cdot,\cdot)$ satisfies
\begin{equation}\label{eq:vinf-lip}
\left|v_{\infty}(z,t)-v_{\infty}(\tilde{z},t)\right| \leq
|z-\tilde{z}|, \quad \text{for } (z,t), (\tilde{z},t) \in
\R\times[0,T].
\end{equation}
\end{lemma}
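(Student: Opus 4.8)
The plan is to obtain the Lipschitz bound for $v_{\infty}$ by passing to the limit in the corresponding uniform bound for the approximating sequence $(v_n)_{n\geq 0}$. Lemma~\ref{lemma:vn}, specifically the estimate \eqref{eq:vn-lip}, already tells us that for every $n\geq 0$ and every $t\in[0,T]$,
\[
|v_n(z,t)-v_n(\tilde z,t)|\leq |z-\tilde z|,\qquad z,\tilde z\in\R,
\]
with a Lipschitz constant equal to $1$, uniform in both $n$ and $t$. This is the essential input and it is available without further work.

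Next I would fix $(z,t)$ and $(\tilde z,t)$ in $\R\times[0,T]$ and invoke the pointwise convergence $v_n(z,t)\to v_{\infty}(z,t)$ established just after Lemma~\ref{lemma:cauchy} (reinforced, if desired, by the locally uniform convergence in Corollary~\ref{cor:uniform-conv}, i.e.\ \eqref{eq:uniform-conv}). Since the displayed inequality holds for every $n$ and its left-hand side converges to $|v_{\infty}(z,t)-v_{\infty}(\tilde z,t)|$ while the right-hand side is independent of $n$, the bound is inherited in the limit, which is precisely \eqref{eq:vinf-lip}. As $(z,t)$ and $\tilde z$ were arbitrary, this yields the claim.

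There is essentially no obstacle here: the only point worth stating carefully is that the Lipschitz constant in \eqref{eq:vn-lip} does not depend on $n$, so that it survives the passage to the limit; plain pointwise convergence of the $v_n$ suffices, and the locally uniform character of the convergence in Corollary~\ref{cor:uniform-conv} is not actually needed. (Alternatively one could observe that a pointwise — or locally uniform — limit of $1$-Lipschitz functions is $1$-Lipschitz, but the direct argument above is the cleanest route.)
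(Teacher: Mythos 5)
Your argument is correct and coincides with the paper's: the paper also fixes $z,\tilde z$, invokes the $n$-independent Lipschitz bound from Lemma~\ref{lemma:vn} together with the triangle inequality, and lets $n\to\infty$ (the paper happens to quote the explicit rate from Corollary~\ref{cor:uniform-conv}, but as you note plain pointwise convergence already suffices). No gaps.
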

\begin{proof}
For fixed $z$ and $\tilde{z}$, let us choose a compact domain
$\mathcal{D}_{z,\tilde{z}} \subseteq \R$, so that $z,\tilde{z} \in
\mathcal{D}_{z,\tilde{z}}$. Then we have
\begin{eqnarray}
\left|v_{\infty}(z,t) - v_{\infty}(\tilde{z},t)\right| &\leq&
|v_{\infty}(z,t)-v_n(z,t)| + |v_n(z,t)-v_n(\tilde{z},t)| +
|v_n(\tilde{z},t)-v_{\infty}(\tilde{z},t)|\nonumber\\
&\leq& 2\left(1-e^{-\lambda\eta(T-t)}\right)^n
M_{\mathcal{D}_{z,\tilde{z}}} + |z-\tilde{z}|.
\label{est-vinf-lip}
\end{eqnarray}
In order to obtain the last inequality, we use Lemmas
\ref{lemma:vn} and Corollary \ref{cor:uniform-conv}. Since $n$ in
the second inequality in (\ref{est-vinf-lip}) is arbitrary, the
result follows.
\end{proof}

\begin{corollary} \label{lemma:vinf-holder}
$v_{\infty}(\cdot,\cdot)$ satisfies
\begin{equation}\label{eq:vinf-holder}
|v_{\infty}(z,t) - v_{\infty}(z,s)| \leq F_{\infty} \, (1+|z|)\,
|t-s|^{\frac12},
\end{equation}
in which constant $F_{\infty}< \infty$ .
\end{corollary}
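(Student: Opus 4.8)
The plan is to mimic the proof of Lemma~\ref{lemma:vinf-lip}, replacing the spatial Lipschitz estimate (\ref{eq:vn-lip}) with the temporal H\"older estimate (\ref{eq:vn-holder}), and then sending $n\to\infty$. The one point requiring care is that the H\"older constants $F_n$ appearing in Lemma~\ref{lemma:vn} a priori depend on $n$, so before passing to the limit I must argue that they can be chosen uniformly bounded.

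First I would establish that $\sup_{n\geq 0} F_n<\infty$. Tracing through the proof of Lemma~\ref{lemma:holder}, the constant $F$ produced there depends on the data $\lambda,\xi,T$ and on $M_f$ only through the linear-growth constants $\tilde D=\max\{M_f,D\}$ and $\tilde E=\max\{U+\alpha(M_f+B/\xi),E\}$ of Remark~\ref{remark-tildeDE}, and is nondecreasing in each of them. Applying this with $f=v_n$, $D=1$ and $M_f=M_n$, Lemma~\ref{lemma:Minf} gives $M_n<M_{\infty}$ for every $n$, so $\tilde D_n\le\max\{M_{\infty},1\}$ and $\tilde E_n\le\max\{U+\alpha(M_{\infty}+B/\xi),1\}$ are bounded independently of $n$. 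Hence there is a finite constant $F_{\infty}$, depending only on $\lambda,\xi,T$ and $M_{\infty}$, with $F_n\le F_{\infty}$ for all $n\geq 0$.

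Then I would fix $z\in\R$ and $0\le t<s\le T$, pick a compact set $\mathcal{D}_z\subset\R$ with $z\in\mathcal{D}_z$, and write, for every $n$,
\begin{equation*}
|v_{\infty}(z,t)-v_{\infty}(z,s)| \le |v_{\infty}(z,t)-v_n(z,t)| + |v_n(z,t)-v_n(z,s)| + |v_n(z,s)-v_{\infty}(z,s)|.
\end{equation*}
The first and third terms are each at most $M_{\mathcal{D}_z}\bigl(1-e^{-\lambda\eta(T-t)}\bigr)^n$ by Corollary~\ref{cor:uniform-conv} (with $\eta=\max\{\xi,1\}$), and the middle term is at most $F_n(1+|z|)|t-s|^{1/2}\le F_{\infty}(1+|z|)|t-s|^{1/2}$ by (\ref{eq:vn-holder}) together with the uniform bound just obtained. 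Since this holds for every $n$ and the first and third terms tend to $0$ as $n\to\infty$, letting $n\to\infty$ yields (\ref{eq:vinf-holder}) with the constant $F_{\infty}$.

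The main obstacle is the first step: one must reopen the proof of Lemma~\ref{lemma:holder} and confirm that the constant it produces depends on $M_f$ in a controlled, monotone way, so that substituting the uniform bound $M_{\infty}$ for $M_n$ is legitimate; once the uniform H\"older bound $F_n\le F_{\infty}$ is in hand, the remainder is exactly the triangle-inequality-and-limit argument already used for Lemma~\ref{lemma:vinf-lip}.
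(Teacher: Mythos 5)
Your argument is correct, but it takes a genuinely different route from the paper's short proof. The paper exploits the fixed-point identity $Jv_{\infty}=v_{\infty}$ from Lemma~\ref{lemma:fixpt}: since $v_{\infty}$ is Lipschitz in $z$ with constant $1$ (Lemma~\ref{lemma:vinf-lip}) and $M_{v_{\infty}}=\sup_{t}v_{\infty}(0,t)\le M_{\infty}<\infty$ (Lemma~\ref{lemma:Minf} via Corollary~\ref{cor:dominatefun}), one may apply Lemma~\ref{lemma:holder} directly to $f=v_{\infty}$ and read off the H\"older bound for $Jv_{\infty}=v_{\infty}$, with a constant $F$ depending only on $\lambda,\xi,T,M_{\infty}$. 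Your alternative --- establishing a uniform bound $F_n\le F_{\infty}$ by tracing the monotone dependence of the constant $F$ in Lemma~\ref{lemma:holder} on $M_f$ through $\tilde D$ and $\tilde E$, and then passing to the limit via the triangle inequality and Corollary~\ref{cor:uniform-conv} --- is equally valid, and is in some sense more explicit about why the H\"older constants do not blow up along the sequence. The trade-off is that you must reopen the proof of Lemma~\ref{lemma:holder} to verify the monotonicity of $F$ in $M_f$, whereas the paper's route bypasses this entirely by applying $J$ once to the limit function; your route, on the other hand, does not need Lemma~\ref{lemma:fixpt} at all, which the paper's proof implicitly requires to identify $Jv_{\infty}$ with $v_{\infty}$.
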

\begin{proof}
This is a direct application of Lemmas \ref{lemma:holder}, \ref{lemma:Minf} and
\ref{lemma:vinf-lip}.  Note that Lemma~\ref{lemma:Minf} is needed to show that $\sup_{t\in[0,T]} \{v_{\infty}(0,t)\}<\infty$, which is required by Lemma~\ref{lemma:holder}.
\end{proof}

\subsection{Proof of Theorem~\ref{theorem-2}}\label{sec:proof}

\indent (i) This is a direct consequence of Lemma~\ref{lemma:cauchy}, which shows that the sequence
$\{v_n(z,t)\}_{n\geq 0}$ is a Cauchy sequence.

\noindent (ii) See Corollary~\ref{cor:uniform-conv}.

\noindent (iii) 
Using the inequalities (\ref{eq:vn-lip}) and
(\ref{eq:vn-holder}) in Lemma \ref{lemma:vn}, we can apply Proposition
\ref{theorem-1} to the function $f =v_n$. It indicates
$Jv_n(\cdot,\cdot)$ is the unique classical solution of the following
equation
\begin{eqnarray}
& &\A(t)Jv_n(z,t) -\lambda\xi Jv_{n}(z,t) + \lambda \cdot
(Pv_n)(z,t) +
 \frac{\partial}{\partial t}Jv_{n}(z,t) =0 \label{eqn-Jvn-1}\\
 & &Jv_n(z,T)= (\zeta\cdot(z-K_1))^+ ,\nonumber
\end{eqnarray}
for $(z,t)\in\R\times[0,T]$. By the definition of the sequence
$(v_n(\cdot,\cdot))_{n\geq0}$ in (\ref{eq:def-vn}), we have
$Jv_n(\cdot,\cdot)=v_{n+1}(\cdot,\cdot)$. So $v_{n+1}$ is the
unique solution of (\ref{eqn-vn-1}) and (\ref{eqn-vn-2}).

\noindent (iv) Because of Lemma
\ref{lemma:vinf-lip} and Corollary \ref{lemma:vinf-holder}, we can
apply Proposition \ref{theorem-1} to the function $f= v_{\infty}$. It
shows $Jv_{\infty}(\cdot,\cdot)$ is the unique classical solution of the
following parabolic partial differential equation
\begin{eqnarray}
& &\A(t)Jv_{\infty}(z,t) -\lambda\xi Jv_{\infty}(z,t) + \lambda
\cdot (Pv_{\infty})(z,t) +
 \frac{\partial}{\partial t}Jv_{\infty}(z,t) =0 \label{eqn-Jvinf-1}\\
 & &Jv_{\infty}(z,T)= (\zeta\cdot(z-K_1))^+ ,\label{eqn-Jvinf-2}
\end{eqnarray}
However, $Jv_{\infty}=v_{\infty}$ by Lemma
\ref{lemma:fixpt}. Therefore, $v_{\infty}(\cdot,\cdot)$ is the unique
classical solution of the integro-partial differential equation
(\ref{eqn-vinf-1}) and (\ref{eqn-vinf-2}).
\hfill $\square$
\newline
\newline
\textbf{Acknowledgment} We are grateful to the anonymous associate editor and the two referees for detailed comments that helped us improve our paper.


{\small }

\begin{table}[p!b!]
\begin{minipage}[c]{\textwidth}
\label{table:kou} \caption{The approximated price for continuously
averaged European type Asian options for a double exponential jump model.}
$r=0.15$, $S_0=100$, $T=1$, $p=0.6$ and $\eta_1=\eta_2=25$. Monte Carlo method
uses $10^6$ simulations and $10^3$ time steps.``C - P" is the
difference between our approximated call and put option prices. ``Parity" is the difference predicted by the put-call parity. Run
times are in seconds.

\begin{center}
\begin{tabular}{|r|r|r|r|r|r|r|r|r|r|r|r|}
\hline

\multicolumn{12}{|c|}{European Asian call option prices for a double exponential jump diffusion model}\\

\hline

\multirow{3}{*}{$\sigma$} & \multirow{3}{*}{$K_2$} &
\multirow{3}{*}{$\lambda$} & \multicolumn{6}{|c|}{Iteration
Algorithm} &
\multicolumn{3}{|c|}{Monte Carlo (Call Option)}\\
\cline{4-9}

& & & \multicolumn{2}{|c|}{Call Option (C)} &
\multicolumn{2}{|c|}{Put Option (P)} & \multirow{2}{*}{C - P} & \multirow{2}{*}{Parity} & \multicolumn{3}{|c|}{} \\

\cline{4-7} \cline{10-12}

& & & Value & Time  & Value & Time  & & & Value & Std. Err. & Time \\

\hline

\multirow{6}{*}{0.1} & \multirow{2}{*}{90} & 1 & 15.419 & 1.0 & 0.012 & 1.0 & 15.407 & \multirow{2}{*}{15.398} & 15.410 & 0.006 & 913  \\
 &  & 3 & 15.457 & 1.5 & 0.045 & 1.5 & 15.412 & & 15.442 & 0.007 & 976 \\
\cline{2-12}

 & \multirow{2}{*}{100} & 1 & 7.170 & 1.0 & 0.376 & 1.0 & 6.794 & \multirow{2}{*}{6.791} & 7.170 & 0.006 & 919 \\
 &  & 3 & 7.456 & 1.5 & 0.656 & 1.6 & 6.800 & & 7.439 & 0.007 & 987 \\
\cline{2-12}

 & \multirow{2}{*}{110} & 1 & 1.702 & 1.0 & 3.520 & 1.0 & -1.818 & \multirow{2}{*}{-1.817} & 1.697 & 0.004 & 906 \\
 &  & 3 & 2.220 & 1.5 & 4.040 & 1.6 & -1.820 & & 2.207 & 0.004 & 981 \\
\hline

\multirow{6}{*}{0.2} & \multirow{2}{*}{90} & 1 & 15.699 & 1.0 & 0.292 & 1.0 & 15.407 & \multirow{2}{*}{15.398} & 15.686 & 0.012 & 908 \\
 & & 3 & 15.802 & 1.5 & 0.390 & 1.6 & 15.412 &  & 15.806 & 0.012 & 983 \\

\cline{2-12}

 & \multirow{2}{*}{100} & 1 & 8.540 & 1.0 & 1.745 & 1.0 & 6.795 & \multirow{2}{*}{6.791} & 8.540 & 0.010 & 935 \\
 & & 3 & 8.790 & 1.5 & 1.994 & 1.6 & 6.796 & & 8.784 & 0.010 & 996\\

\cline{2-12}

 & \multirow{2}{*}{110} & 1 & 3.723 & 1.0 & 5.541 & 1.0 & -1.818 & \multirow{2}{*}{-1.817}& 3.721 & 0.007 & 921\\
 & & 3 & 4.045 & 1.6 & 5.864 & 1.6 & -1.819 & & 4.038 & 0.007 & 983 \\
\hline
\end{tabular}
\end{center}

\end{minipage}



\begin{minipage}[c]{\textwidth}
\label{table:merton} \caption{The approximated price for
continuously averaged European type Asian options for normal jump diffusion model.}
$r=0.15$, $S_0=100$, $T=1$, $\lambda=1$, $\tilde{\mu}=-0.1$ and
$\tilde{\sigma}=0.3$. Monte Carlo method uses $10^6$ simulations
and $10^3$ time steps. ``C - P" is the difference between our
approximated call and put option prices. ``Parity" is the
difference predicted by the put-call parity. Run times are in
seconds.

\begin{center}
\begin{tabular}{|r|r|r|r|r|r|r|r|r|r|r|}
\hline

\multicolumn{11}{|c|}{European Asian call option with normal jump}\\

\hline

\multirow{3}{*}{$\sigma$} & \multirow{3}{*}{$K_2$} & \multicolumn{6}{|c|}{Iteration Algorithm} & \multicolumn{3}{|c|}{Monte Carlo (Call Option)}\\
\cline{3-8}

& & \multicolumn{2}{|c|}{Call Option} & \multicolumn{2}{|c|}{Put
Option} & \multirow{2}{*}{C - P} & \multirow{2}{*}{Parity} &
\multicolumn{3}{|c|}{} \\

\cline{3-6} \cline{9-11}

& & Value & Time & Value & Time & & & Value & Std. Err. & Time\\

\hline

\multirow{3}{*}{0.1} & 90 & 16.997 & 0.5 & 1.601 & 0.5 & 15.396 & 15.398 & 16.991 & 0.014 & 913 \\
 & 100 & 10.062 & 0.5 & 3.272 & 0.5 & 6.789 & 6.791 & 10.046 & 0.013 & 910 \\
 & 110 & 4.836 & 0.5 & 6.653 & 0.5 & -1.817 & -1.816 & 4.834 & 0.011 & 915 \\

\hline

\multirow{3}{*}{0.2} & 90 & 17.346 & 0.5 & 1.950 & 0.5 & 15.396 & 15.398 & 17.339 & 0.017 & 919\\
 & 100 & 10.959 & 0.5 & 4.170 & 0.5 & 6.789 & 6.791 & 10.968 & 0.015 & 917\\
 & 110 & 6.303 & 0.5 & 8.120 & 0.5 &-1.817 & -1.816 & 6.310 & 0.012 & 913 \\

\hline
\end{tabular}
\end{center}
\end{minipage}

\end{table}

\begin{table}[t!b!p!]
The parameters for Asian options in the following three tables are
the same as the parameters used in the 7th row in Table 1, i.e. $r=0.15$, $S_0=100$,
$K_1=0$, $K_2=90$, $T=1$, $\sigma=0.2$, $\lambda=1$, $p=0.6$ and
$\eta_1=\eta_2=25$.

\begin{minipage}[c]{\textwidth}
\label{table:num-tru} \caption{The convergence of the option price
with respect to the truncation length of the numerical integral.}
As we introduced in (\ref{eq:numint}), the integral term in
(\ref{eq:op-P}) is approximated by the trapezoidal rule on an
interval $[x_{min}, x_{max}]$ with $x_{min} = x_0 < x_1 < \cdots <
x_L = x_{max}$. In this table, fixing the discretization, we study
the convergence with respect to the length of the truncation
interval $[x_{min}, x_{max}]$. We choose $x_{min} = -N/\eta_2$ and
$x_{max} = N/\eta_1$. In (\ref{eq:op-P-log}), if the distribution
$F$ is the double exponential, when $N$ is large, the probability
that the random variable $X$ be outside the interval $[-N/\eta_2,
N/\eta_1]$ is very small (for example, when $N=15$ the probability
is less than $10^{-6}$).

\begin{center}
\begin{tabular}{|r|r|r|r|r|r|}
\hline

\multicolumn{6}{|c|}{Convergence with respect to truncation}\\

\hline

N & Call Option (C)& Time & Put Option (P) & Time & (C - P) - Parity\\

\hline

5 & 15.5832 & 0.500 & 0.2858 & 0.516 &  -0.1002\\
8 & 15.6953 & 0.765 & 0.2916 & 0.797 & 0.0061\\
10 & 15.6994 & 0.969 & 0.2921 & 1.000 &  0.0097\\
12 & 15.6995 & 1.141 & 0.2921 & 1.187 &  0.0098\\
15 & 15.6995 & 1.391 & 0.2921 & 1.500 &  0.0098\footnote{Because we fix the discretization of the numerical integral, the difference between the calculated value and predicted value in the last column doesn't seem to converge to 0. But as $\Delta x \rightarrow 0$, the difference will converge to 0 as we will see in the next Table.}\\

\hline
\end{tabular}
\end{center}
\end{minipage}

\hspace{2cm}

\begin{minipage}[c]{\textwidth}
\label{table:num-dis} \caption{The convergence of the option price
with respect to the grid size of the numerical integral.} In this
table, we fix the truncation of the numerical integral as
$x_{min}= -10/\eta_2$ and $x_{max}=10/\eta_1$, we will show the
convergence with respect to the number of grids $L$ in the
discretization of numerical integral in (\ref{eq:numint}). Since
the density of double exponential distribution has a cusp at zero,
we choose an unequaly spaced grid here. The closer $x$ to zero is, the
finer the grid is. While fixing the truncation interval $[x_{min},
x_{max}]$, the larger $L$ is the finer the grid is (see Table 4
for the notation).

\begin{center}
\begin{tabular}{|r|r|r|r|r|r|}
\hline

\multicolumn{6}{|c|}{Convergence with respect to discretization}\\

\hline

L & Call Option (C) & Time & Put Option (P) & Time &  (C - P) - Parity \\

\hline

200 & 15.7295 & 0.422 & 0.2926 & 0.422  & 0.0393\\
300 & 15.7103 & 0.578 & 0.2923 & 0.609  & 0.0204\\
400 & 15.7034 & 0.766 & 0.2924 & 0.797  & 0.0134\\
500 & 15.6944 & 0.969 & 0.2921 & 1.000  & 0.0097\\
600 & 15.6968 & 1.141 & 0.2920 & 1.172  & 0.0072\\
700 & 15.6954 & 1.344 & 0.2920 & 1.360  & 0.0058\\
800 & 15.6943 & 1.516 & 0.2920 & 1.562  & 0.0047\\

\hline
\end{tabular}
\end{center}
\end{minipage}

\end{table}

\begin{table}[t!b!p!]
\begin{minipage}[c]{\textwidth}
\label{table:num-grid} \caption{The convergence of the option
price with respect to the grid sizes used in the finite difference
scheme.} In this table we fix $x_{min}= -10/\eta_2$ and
$x_{max}=10/\eta_1$,  $L=1000$ (See Table 4 for the notation).
Moreover, we fix $z_{min} = z-0.5 $ and $z_{max}= z+0.5$ with $z=
\left(1-e^{-rT}\right)/(rT)-e^{-rT}K_2/S_0$ defined in
(\ref{eq:defn-z-x0-s0}). We will show the convergence with respect
to time and space grid sizes that are used in implementing the
finite difference scheme.

\begin{center}
\begin{tabular}{|r|r|r|r|r|}
\hline

\multicolumn{5}{|c|}{Convergence with respect to grid sizes}\\

\hline

Number of Time Steps & Number of Space Steps & Call Option Price & Changes  & Time\\

\hline

10 & 40 & 15.7093 & n.a. & 0.438\\
25 & 100 & 15.6929 & 0.0164 & 1.890\\
50 & 200 & 15.688 & 0.0049 & 7.500\\
100 & 400 & 15.6864 & 0.0016 & 29.406\\

\hline
\end{tabular}
\end{center}
\end{minipage}
\end{table}

\end{document}